%% LyX 1.6.5 created this file.  For more info, see http://www.lyx.org/.
%% Do not edit unless you really know what you are doing.
\documentclass[11pt,oneside,english]{amsart}
\usepackage[T1]{fontenc}
\usepackage[latin9]{inputenc}
\usepackage[letterpaper]{geometry}
\geometry{verbose,tmargin=20mm,bmargin=20mm,lmargin=39mm,rmargin=39mm}
\pagestyle{plain}
\setcounter{tocdepth}{1}
\usepackage{amsthm}
\usepackage{amstext}
\usepackage{amssymb}
\usepackage{esint}

\makeatletter

%%%%%%%%%%%%%%%%%%%%%%%%%%%%%% LyX specific LaTeX commands.
%% Special footnote code from the package 'stblftnt.sty'
%% Author: Robin Fairbairns -- Last revised Dec 13 1996
\let\SF@@footnote\footnote
\def\footnote{\ifx\protect\@typeset@protect
    \expandafter\SF@@footnote
  \else
    \expandafter\SF@gobble@opt
  \fi
}
\expandafter\def\csname SF@gobble@opt \endcsname{\@ifnextchar[%]
  \SF@gobble@twobracket
  \@gobble
}
\edef\SF@gobble@opt{\noexpand\protect
  \expandafter\noexpand\csname SF@gobble@opt \endcsname}
\def\SF@gobble@twobracket[#1]#2{}

%%%%%%%%%%%%%%%%%%%%%%%%%%%%%% Textclass specific LaTeX commands.
\numberwithin{equation}{section}
\numberwithin{figure}{section}
\theoremstyle{plain}
\newtheorem{thm}{Theorem}[section]
  \theoremstyle{plain}
  \newtheorem{cor}[thm]{Corollary}
  \theoremstyle{plain}
  \newtheorem{prop}[thm]{Proposition}
  \theoremstyle{remark}
  \newtheorem{rem}[thm]{Remark}

%%%%%%%%%%%%%%%%%%%%%%%%%%%%%% User specified LaTeX commands.

\def\makebbb#1{
    \expandafter\gdef\csname#1\endcsname{
        \ensuremath{\Bbb{#1}}}
}\makebbb{R}\makebbb{N}\makebbb{Z}\makebbb{C}\makebbb{H}\makebbb{E}\makebbb{Q}\makebbb{P}\makebbb{B}\makebbb{K}\makebbb{E}

\usepackage{babel}

\usepackage{babel}

\makeatother

\usepackage{babel}

\makeatother

\usepackage{babel}

\begin{document}

\title{Sharp asymptotics for Toeplitz determinants and convergence towards
the gaussian free field on Riemann surfaces}

\author{Robert J. Berman}

\address{Chalmers University of Technology and University of Gothenburg}

\email{robertb@chalmers.se}
\begin{abstract}
We consider canonical determinantal random point processes with $N$
particles on a compact Riemann surface $X$ defined with respect to
the constant curvature metric. We establish strong exponential concentration
of measure type properties involving Dirichlet norms of linear statistics.
This gives an optimal Central Limit Theorem (CLT), saying that the
fluctuations of the corresponding empirical measures converge, in
the large $N$ limt, towards the Laplacian of the Gaussian free field
on $X$ in the strongest possible sense. The CLT is also shown to
be equivalent to a new sharp strong Szegö type theorem for Toeplitz
determinants in this context. One of the ingredients in the proofs
are new Bergman kernel asymptotics providing exponentially small error
terms in a constant curvature setting.

\tableofcontents{}
\end{abstract}
\maketitle

\section{Introduction}

This paper is one in a series which deal with $N-$particle determinantal
point processes on a polarized compact complex manifold $X,$ i.e.
associated to high powers of an ample line bundle $L\rightarrow X.$
In the paper in \cite{berm6} a general Large Deviation Principle
(LDP) for such processes was established in the large $N-$limit showing
that the empirical measures converge exponentially towards the deterministic
pluripotential equilibrium measure. Moreover, in the paper \cite{berm2}
a Central Limit Theorem (CLT) was obtained, showing that the fluctuations
in the {}``bulk'' may be desribed by a Gaussian free field in the
case of \emph{smooth }test functions (linear statistics). In the present
paper we specialize to the lowest dimensinal case when $X$ is a Riemann
surface and the corresponding $N-$particle point processes are the
{}``canonical'' ones, i.e. the they are induced by the Kähler-Einstein
metric on $X.$ In this setting we obtain sharp versions of the upper
large deviation bound and show that the convergence towards the Gaussian
free field holds in the strongest possible sense, i.e. for linear
statistics with minimal regularity assumptions (finite Dirichlet norm).
This CLT is equivalent to a new sharp strong Szegö type theorem for
Toeplitz determinants in this context. The results are obtained from
new {}``determinantal'' Moser-Trudinger type inequalities, which
imply strong concentration of measures properties. The proof of these
latter inequalities is based on a convexity argument in the space
of all Kähler metrics, combined with Bergman kernel asymptotics and
potential theory.

\subsection{The general setup}

Let $L\rightarrow X$ be an ample holomorphic line bundle over a compact
complex manifold $X$ of dimension $n.$ We will denote by $H^{0}(X,L)$
the $N-$dimensional vector space of all global holomorphic sections
of $L.$ Given the geometric data $(\nu,\left\Vert \cdot\right\Vert )$
consisting of a probability measure $\nu$ on $X$ and a continuous
Hermitian metric $\left\Vert \cdot\right\Vert $ on $L$ one obtains
an associated probability measure $\mu^{(N)}$ on the $N-$fold product
$X^{N}$ defined as \begin{equation}
\mu^{(N)}:=\frac{1}{\mathcal{Z}_{N}}\left\Vert \det\Psi\right\Vert ^{2}(x_{1},...x_{N})\nu(x_{1})\otimes\cdots\otimes\nu(x_{N})\label{eq:def of det process intro}\end{equation}
where $\det\Psi$ is a holomorphic section of the pulled-back line
bundle $L^{\boxtimes N}$ over $X^{N}$ representing the $N$th (i.e.
maximal) exterior power of $H^{0}(X,L)$ and $\mathcal{Z}_{N}$ is
the normalizing constant. Concretely, fixing a base $(\Psi_{i})_{i=1}^{N}$
in $H^{0}(X,L)$ we can take \begin{equation}
(\det\Psi)(x_{1},...,x_{N_{k}})=\det(\Psi_{i}(x_{j}))\label{eq:slater det}\end{equation}
We will denote $\frac{i}{2\pi}$ times the curvature two-form of the
metric on $L$ by $\omega$ (compared with mathematical physics notation
$\omega=\frac{i}{2\pi}F_{A}$ where $A$ is the Chern connection induced
by the metric on $L$). It will be convenient to take the pair $(\omega,\nu),$
which will refer to as a\emph{ weighted measure}, as the given geometric
data. The\emph{ empirical measure} of the ensemble above is the following
random measure: \begin{equation}
(x_{1},...,x_{N})\mapsto\delta_{N}:=\sum_{i=1}^{N}\delta_{x_{i}}\label{eq:intro random measure}\end{equation}
which associates to any $N-$particle configuration $(x_{1},...,x_{N})$
the sum of the delta measures on the corresponding points in $X.$
In probabilistic terms this setting hence defines a \emph{determinantal
random point process on $X$ with $N$ particles} \cite{h-k-p,j2}. 

If the correponding $L^{2}-$norm on $H^{0}(X,L)$ \begin{equation}
\left\Vert \Psi\right\Vert _{X}^{2}=\left\langle \Psi,\Psi\right\rangle _{X}:=\int_{X}\left\Vert \Psi(x)\right\Vert ^{2}d\nu(x)\label{eq:general l2 norm intro}\end{equation}
is non-degenerate (which will always be the case in this paper) then
the probability measure $\mu^{(N)}$ on $X^{N}$ may be expressed
as a determinant of the\emph{ Bergman kernel} of the Hilbert space
$(H^{0}(X,L),\left\Vert \cdot\right\Vert _{X}),$ i.e. the integral
kernel of the corresponding orthogonal projection $\Pi.$ A central
role in this paper will be played by the\emph{ logarithmic generating
function} (or \emph{free enegy}) \[
\log\E(e^{-(\sum_{i=1}^{N}(\phi(x_{i})})\]
of the \emph{linear statistic} \begin{equation}
\sum_{i=1}^{N}\phi(x_{i}),\label{eq:linear stat}\end{equation}
where $\E$ denotes the expectation wrt the ensemble $(X^{N},\mu^{(N)}),$
i.e. $\E(\cdot)=\int_{X^{N}}(\cdot)\mu^{(N)}.$ By a well-known formula
going back to the work of Heine in the theory of orthogonal polynomials
the expectation above can also be writen as a \emph{Toeplitz determinant}
with symbol $e^{-\phi}:$ \begin{equation}
\E(e^{-(\sum_{i=1}^{N}(\phi(x_{i}))})=\det(\left\langle e^{-\phi}\Psi_{i},\Psi_{j}\right\rangle _{X})\left(=\det T[e^{-\phi}]\right)\label{eq:log moment as toeplitz det}\end{equation}
 where $(\Psi_{i})_{i=1}^{N}$ is an orthonormal base in the Hilbert
space $(H^{0}(X,L),\left\Vert \cdot\right\Vert _{X})$ (and $T[f]:=\Pi(f\cdot)$
is the corresponding Toeplitz operator on $H^{0}(X,L)$ with symbol
$f).$ Replacing $L$ with its $k$ th tensor power, which we will
write in additive notation as $kL,$ yields,a sequence of point processes
on $X$ of an increasing number $N_{k}$ of particles. We will be
concerned with the asymptotic situation when $k\rightarrow\infty.$
This corresponds to a large $N-$limit of many particles, since \[
N_{k}:=\dim H^{0}(X,kL)=Vk^{n}+o(k^{n})\]
where the constant $V$ is, by definition, the \emph{volume} of $L.$

As shown in \cite{berm6} the normalized empirical measure $\delta_{N}/N_{k}$
converges towards a pluripotential equlibrium measure $\mu_{eq},$
exponentially in probability. In particular, letting

\begin{equation}
\epsilon_{N_{k},\lambda}(\phi):=\mbox{Prob}\left\{ \left|\frac{1}{N_{k}}(\phi(x_{1})+....+\phi(x_{N_{k}}))-\int_{X}\mu_{eq}\phi\right|>\lambda\right\} \label{eq:def of tail}\end{equation}
denote the\emph{ tail of the linear statistic} determined by $\phi,$
at level $k,$ it was shown that $\epsilon_{N_{k},\lambda}(\phi)\rightarrow0$
as $k\rightarrow\infty$ for any $\lambda>0$ at a rate of the order
$e^{-k^{n+1}/C}.$ In the case when $X$ is a Riemann surface the
curvature current $\omega$ of the metric on $L$ is semi-positive
(so that $\mu_{eq}=\omega)$ the following more precise estimate was
obtained: \begin{equation}
\epsilon_{N_{k},\lambda}(\phi)\leq2\exp(-N_{k}^{2}\left(\frac{2V\lambda^{2}}{\left\Vert d\phi\right\Vert _{X}^{2}}(1+o(1))\right))\label{eq:exp conc general riemann s}\end{equation}
 where the error term $o(1)$ denotes a sequence tending to zero as
$k\rightarrow\infty$ (but depending on $\phi$).

\subsection{The canonical setting on a Riemann surface }

Let now $L\rightarrow X$ be a line bundle of positive volume (degree)
$V$ over a Riemann surface $X$ of genus $g.$ It determines a particular
sequence of determinantal point process that we will refer to as the
\emph{canonical deteterminantal point process on $X$ associated to
$kL.$ }These processes are obtained by letting $(\nu,\omega)=(\omega/V,\omega)$
for $\omega$ the the unique volume form on $X$ of volume $V$ such
that Riemannian metric determined by $\omega$ has constant scalar
curvature. By the Riemann-Roch theorem we have (for $k$ sufficently
large) \begin{equation}
N_{k}=kV-(g-1)\label{eq:r-r relation intro}\end{equation}
giving a simple relation between the level $k$ and the corresponding
number of particles $N_{k}$. Accordingly, it will be convenient to
talk about the \emph{canonical determinantal random point process
on $X$ with $N$ particles }and use $N$ as the asymptotic parameter.\emph{
}Strictly speaking $N(=N_{k})$ only determines $L$ up to twisting
by a flat line bundle, but the results will be independant of the
flat line bundle. Physically, the canonical processes associated to
$kL$ represents the groundstate of a gas of spin-polarized free ferrmions
in the {}``uniform'' magnetic field $kF_{A}$ where $\omega=\frac{i}{2\pi}F_{A}$
and $A$ is a unitary connection on $L$ (see \cite{berm6} and references
therein). Equivalently, these processes are defined by the lowest
Landau level of the corresponding magnetic Schrödinger operator.

The simplest case of this setting occurs when $g=0,$ i.e. $X$ is
the Riemann sphere and then $H^{0}(X,kL)$ may be identified with
the space of all polynomials on the affine piece $\C$ of degree at
most $k=N-1$ equipped with the usual $SU(2)-$invariant Hermitian
product. Alternatively, embedding $X$ as the unit-sphere in Euclidian
$\R^{3}$ the $N-$point correlation function of the process, i.e.
the density of the probability measure, may be explicitely expanded
as \[
\rho^{(N)}(x_{1},...,x_{N}):=\Pi_{1\leq i<j\leq N}\left\Vert x_{i}-x_{j}\right\Vert ^{2}/Z_{N}\]
where $1/Z_{N}=N^{N}\binom{N-1}{0}...\binom{N-1}{N-1}/N!.$ In the
physics litterature this ensemble also appears as a \emph{Coulomb
gas} of $N$ unit-charge particles (i.e a one component plasma) confined
to the sphere in a neutralizing uniform background $\omega$ (see
for example \cite{ca}). An interesting random matrix model for this
process was recently given in \cite{kr}. In the higher genus case
the role of polynomials are played by theta functions and modular
(automorphic) forms on the universal covers $\C$ and $\H$ of $X$
(when $g=1$ and $g>1$ respectively) equipped with their standard
Hermitian products. See for example \cite{f1} for the case $g=1$
in connection to fermions and bosonization. When $g>1$ the Riemann
surface $X$ may be represented as the quotient $\Gamma/\H$ of the
upper half-plane with a suitable discrete subgroup $\Gamma$ of $SL(2,\R).$
Taking $L:=\frac{1}{2}K_{X},$ where $K_{X}$ denotes the canonical
line bundle $K_{X}=T^{*}X$ (using the induced spin structure to take
the square root of $K_{X}$) realizes $H^{0}(X,kL)$ as the Hilbert
space of all modular forms of weight $k,$ i.e. all holomorphic funtions
on $\H$ satisfying $f((az+b)/(cz+d))=(cz+d)^{k}f(z)$ equipped with
the Petterson norm \[
\left\Vert f\right\Vert _{X}^{2}:=\int_{\Gamma/\H}|f|^{2}y^{k}\frac{dx\wedge dy}{y^{2}},\]
 integrating over a fundamental domain for $\Gamma.$ In special arithmetic
situation the base $(\Psi_{i})$ in \ref{eq:slater det} may be represented
by Hecke eigenfunctions (but note that we have assumed that $X$ is
smooth and compact and in particular there are no cusps)\cite{la}.

\subsection{Statement of the main results}

It will be convenient to use the following conformally invariant notation
for the normalized Dirichlet norm of a function $\phi$ on $X,$ i.e.
the $L^{2}-$norm of its gradient times $1/4\pi:$ \[
\left\Vert d\phi\right\Vert _{X}^{2}:=\int_{X}d\phi\wedge d^{c}\phi:\left(=\frac{i}{2\pi}\int_{X}\partial\phi\wedge\bar{\partial}\phi\right)\]
We will obtain a very useful Moser-Trudinger type inequality for the
canonical determinantal point processes, which generalizes Onofri's
sharp version of the Moser-Trudinger inequality \cite{on} (obtained
when $X$ is the two-sphere and $N=1).$ 
\begin{thm}
\label{thm:(determinatal-moser...)...For-the}Let $X$ be a genus
$g$ Riemann surface and consider the canonical determinantal point
process on $X$ with $N$ particles. It satisfies the following Moser-Trudinger
type inequality: \textup{\begin{equation}
\log\E(e^{-(\sum_{i=1}^{N}(\phi(x_{i})-\int_{X}\phi\frac{\omega}{V}))})\leq\left(\frac{1}{1+(1-g)/N}+\epsilon_{N})\right)\frac{1}{2}\left\Vert d\phi\right\Vert _{X}^{2}+\epsilon_{N}\label{eq:thm det ineq}\end{equation}
where the error term $\epsilon_{N}$ is exponentially small, i.e}.,
$\epsilon_{N}\leq Ce^{-N\delta}$ \textup{for some postive number
$C$ and $\delta$ independent of $\phi$ and where $\delta$ can
be explicitely expressed in terms of the injectivity radius of $(X,\omega)$
(see formula \ref{eq:form for delta in prop} in Prop \ref{pro:bergm as}).
Similarly, \begin{equation}
\log\E(e^{-(\sum_{i=1}^{N}(\phi(x_{i})-\E(\phi(x_{i}))})\leq\left(\frac{1}{1+(1-g)/N}+\epsilon_{N}\right)\frac{1}{2}\left\Vert d\phi\right\Vert _{X}^{2}+\epsilon_{N}\left\Vert \phi\right\Vert _{L^{1}(X)/\R}+\epsilon_{N}\label{eq:det moser intro fluct version}\end{equation}
Moreover, when $X$ is the Riemann sphere (i.e \@.$g=0)$ all the
error terms above vanish identically.}
\end{thm}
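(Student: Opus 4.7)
The plan is to study the free energy
\[ \Phi(t) := \log\E\bigl(e^{-t\sum_{i=1}^{N}(\phi(x_{i})-\int_{X}\phi\,\omega/V)}\bigr) = \log\det T[e^{-t\phi}] + tN\textstyle\int\phi\,\omega/V, \]
for $t\in[0,1]$, and bound $\Phi(1)$ via a Taylor expansion at $t=0$. By the determinantal identity \eqref{eq:log moment as toeplitz det}, $\Phi$ is smooth and convex (its second derivative is the variance of the linear statistic in the process perturbed by the weight $e^{-t\phi}\nu$), with $\Phi(0)=0$ and $\Phi'(0) = N\int\phi\,\omega/V - \int\phi\,B(x,x)\,\nu$, where $B$ is the Bergman kernel of the canonical process. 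Proposition \ref{pro:bergm as} provides the exponentially small estimate $|B(x,x)-N|\le Ce^{-N\delta}$ pointwise on the constant-curvature model, so $\Phi'(0) = O(e^{-N\delta})\|\phi\|_{L^{1}/\R}$; on the sphere this difference vanishes identically by $SU(2)$-invariance, accounting for the exact statement there.

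The core step is a uniform Hessian bound $\Phi''(t) \le (\tfrac{N}{N+1-g}+\epsilon_{N})\|d\phi\|_{X}^{2}$ for $t\in[0,1]$. Since
\[ \Phi''(t) = \tfrac12\iint_{X\times X}(\phi(x)-\phi(y))^{2}|K_{t}(x,y)|^{2}\,d\nu_{t}(x)\,d\nu_{t}(y) \]
is the variance of the linear statistic in the perturbed process with reproducing kernel $K_{t}$, I would combine this identity with a convexity argument in the space of (positively curved) Hermitian metrics on $L$: Berndtsson-type positivity of direct images governs the variation of the $L^{2}$-geometry on $H^{0}(X,kL)$ along $t\mapsto t\phi$ and lets one compare $K_{t}$ to the unperturbed Bergman kernel $K_{0}$. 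Inserting the sharp off-diagonal asymptotics of Prop \ref{pro:bergm as} then reduces the double integral to $\|d\phi\|_{X}^{2}$ modulo $O(e^{-N\delta})\|\phi\|_{L^{1}/\R}$; the sharp coefficient $\tfrac{N}{N+1-g}$ emerges from an explicit spectral calculation on the constant curvature model, linked via Riemann--Roch to the relation $N=kV-(g-1)$ ($SU(2)$-representation theory for $g=0$, Fourier analysis for $g=1$, automorphic calculus for $g\ge 2$).

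Integrating $\Phi''$ twice and using the estimate on $\Phi'(0)$ yields \eqref{eq:thm det ineq} at $t=1$; the fluctuation version \eqref{eq:det moser intro fluct version} then follows by rewriting $\E(\phi(x_{i})) = \int\phi\,B(x,x)\,\nu/N$ and absorbing the discrepancy into the $\epsilon_{N}\|\phi\|_{L^{1}/\R}$ term. The main obstacle is propagating the sharp Hessian bound \emph{uniformly in $t$}, since the perturbed kernel $K_{t}$ depends nonlinearly on $\phi$ and naive pointwise Bergman kernel estimates degrade under large perturbations. This is precisely where the convexity / plurisubharmonicity argument in the space of K\"ahler metrics is essential: it transports the sharp $t=0$ estimate (a direct spectral computation on the model) to all of $[0,1]$ with no loss beyond the allowed $\epsilon_{N}$. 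On the sphere, $SU(2)$-invariance of the Fubini--Study Bergman kernel turns every estimate above into an identity, which is why all error terms vanish identically when $g=0$.
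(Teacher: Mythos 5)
Your plan relies on a uniform Hessian bound $\Phi''(t)\le(\tfrac{N}{N+1-g}+\epsilon_N)\|d\phi\|_X^2$ along the \emph{affine} path $t\mapsto t\phi$, and then integrating twice from $t=0$. This is exactly where the argument has a genuine gap. You correctly write $\Phi''(t)$ as the variance of the linear statistic wrt the perturbed kernel $K_t$, but you have no mechanism to control $K_t$ uniformly in $t$ and $\phi$: the perturbed weight $\Phi+t\phi$ is not even $\omega$-psh when $\phi$ is merely of finite Dirichlet norm, so no Bergman-kernel expansion with controlled error is available for $K_t$ when $t>0$. Moreover, Berndtsson's positivity of direct images does \emph{not} give you what you want along the linear path: it is a statement of concavity of $\log\E(e^{-\phi_t})$ along $C^0$-\emph{geodesics} in the space of K\"ahler potentials, whereas along affine segments $\log\E(e^{-\phi_t})$ is \emph{convex} (the paper points this out explicitly just before \eqref{eq:sec deriv of log moment general}). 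Invoking ``convexity in the space of K\"ahler metrics'' to ``transport the $t=0$ estimate to $[0,1]$'' is therefore not a step you can carry out as described; convexity works in the wrong direction on your path.

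The paper sidesteps this entirely: instead of the line $t\mapsto t\phi$, it uses the $C^0$-geodesic from $0$ to $\phi$ (after first replacing $\phi$ by $P_\omega\phi$ via the psh-projection, which you never introduce and which is essential to enter the regime where geodesics exist and Berndtsson's theorem applies). Along the geodesic, $\log\E(e^{-\phi_t})$ is concave and the Monge--Amp\`ere energy $\mathcal E_\omega(\phi_t)$ is affine, so the combined functional $\mathcal F_\omega$ is concave and is bounded above by its derivative at $t=0$ \emph{with no second-order term at all}. The only place the Bergman kernel enters is at $t=0$, where it is unperturbed, and that is what produces the $\phi$-independent exponentially small $\epsilon_N$. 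This is structurally different from your Taylor-to-second-order strategy, and the difference is precisely what makes the error $\phi$-independent. To salvage your approach you would either have to prove a uniform variance bound for a family of perturbed determinantal processes over all $\phi$ with finite Dirichlet norm (which is at least as hard as the theorem), or switch to the geodesic path and rely on concavity, at which point you are reproducing the paper's argument. The Riemann--Roch factor also does not come from a spectral computation on the model; in the paper it is just the ratio $N_k/V_k$ appearing when $\mathcal E_{\omega_k}$ and the Dirichlet norm are compared after passing to the adjoint bundle $L_k=kL-K_X$.
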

An important ingredient in the previous proof is a convexity result
of Berndtsson \cite{bern2} which in this particular case essentially
amounts to the positivity of a certain determinant line bundle over
the space of all Kähler metrics in the first Chern class of $L.$
The error terms $\epsilon_{N}$ above come from the error terms in
the Yau-Tian-Zelditch-Catlin expansion \cite{z,berm0,l-t,liu-l} for
the underlying Bergman kernel. As follows from Theorem \ref{thm:bergman kernel as}
below these error terms are exponentially small, slightly refining
previous recent results in \cite{liu,liu-l} (see section \ref{sec:Bergman-kernel-asymptotics}
 for precise formulations). 

As a simple consequence of the previous theorem we then obtain a sharp
version of the tail estimate \ref{eq:exp conc general riemann s}
for such canonical processes. The main point is that it shows that
the error term $o(1)$ appearing in the estimate \ref{eq:exp conc general riemann s}
can be taken to be independent of the function $\phi.$ As a consequence
the estimate holds with minimal regularity assumptions on $\phi:$ 
\begin{cor}
\label{cor:sharp tail intro}Let $X$ be a genus $g$ Riemann surface
and consider the canonical determinantal point process on $X$ with
$N$ particles.\textup{ Let $\phi$ be a function on $X$ such that
its differential $d\phi$ is in $L^{2}(X).$ Then the linear statistic
defined by $\phi$ has an exponentially decaying tail: \[
\epsilon_{N,\lambda}(\phi)\leq2\exp(-N^{2}\left(\frac{2\lambda^{2}}{\left\Vert d\phi\right\Vert _{X}^{2}(1+\frac{(1-g)}{N})+\epsilon_{N})}+\epsilon_{N}\right))\]
where the error terms $\epsilon_{N}$ are as in the previous theorem. }
\end{cor}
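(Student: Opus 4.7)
The plan is to deduce the tail estimate from the Moser--Trudinger inequality (\ref{eq:thm det ineq}) of Theorem~\ref{thm:(determinatal-moser...)...For-the} by the standard exponential Chebyshev (Chernoff) argument. Set $S_N:=\sum_{i=1}^{N}\bigl(\phi(x_i)-\int_X\phi\,\omega/V\bigr)$, so that the event controlled by $\epsilon_{N,\lambda}(\phi)$ is exactly $\{|S_N|>N\lambda\}$. Splitting this into the two one-sided tails $\{\pm S_N>N\lambda\}$ and applying Markov's inequality gives, for every $t>0$,
\[
\mbox{Prob}(\pm S_N>N\lambda)\;\le\;e^{-tN\lambda}\,\E(e^{\pm tS_N}).
\]

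The next step is to bound the two Laplace transforms on the right by applying Theorem~\ref{thm:(determinatal-moser...)...For-the} with $\phi$ replaced by $\pm t\phi$. Three features of (\ref{eq:thm det ineq}) make this essentially automatic: its left-hand side is already the log-Laplace transform of the same centred sum $S_N$; the right-hand side is quadratic in $\phi$ through the conformally invariant Dirichlet norm $\left\Vert d\phi\right\Vert _{X}^{2}$ (so $\left\Vert d(\pm t\phi)\right\Vert _{X}^{2}=t^{2}\left\Vert d\phi\right\Vert _{X}^{2}$); and the error term $\epsilon_N$ is explicitly independent of $\phi$, hence also of $t$ and the sign. This yields
\[
\log\E(e^{\mp tS_N})\;\le\;c_N\,\frac{t^{2}}{2}\left\Vert d\phi\right\Vert _{X}^{2}+\epsilon_N,\qquad c_N:=\frac{1}{1+(1-g)/N}+\epsilon_N.
\]

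Substituting this into the Markov bound and optimizing the resulting quadratic $-tN\lambda+c_N t^{2}\left\Vert d\phi\right\Vert _{X}^{2}/2$ in $t$ at $t^{\ast}=N\lambda/(c_N\left\Vert d\phi\right\Vert _{X}^{2})$ produces a Gaussian exponent of order $-N^{2}\lambda^{2}/(2c_N\left\Vert d\phi\right\Vert _{X}^{2})+\epsilon_N$ for each one-sided tail; summing the two contributes the prefactor $2$, and rewriting the coefficient of $N^{2}$ in the form displayed in the statement modulo an exponentially small correction (absorbed into the generic $\epsilon_N$) gives the asserted bound. The weaker regularity hypothesis $d\phi\in L^{2}(X)$ (rather than $\phi$ smooth) is then handled by approximating $\phi$ by smooth $\phi_{j}$ with $\left\Vert d(\phi_{j}-\phi)\right\Vert _{X}\to 0$ and passing to the limit in both sides of (\ref{eq:thm det ineq}), which is painless because $\phi$ enters the right-hand side only through its Dirichlet seminorm.

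There is no substantial obstacle beyond Theorem~\ref{thm:(determinatal-moser...)...For-the} itself: the corollary is a textbook Chernoff-type consequence of a sub-Gaussian log-Laplace bound, and the only care required is routine bookkeeping of how the several exponentially small errors $\epsilon_N$ (those inherited from the Moser--Trudinger inequality and those generated by the Chernoff optimization and the union of the two one-sided tails) collapse into the single $\epsilon_N$ appearing in the statement.
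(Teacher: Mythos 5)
Your proof is correct and follows essentially the same route as the paper: Markov's exponential inequality applied to $\pm tS_N$, the Moser--Trudinger bound from Theorem~\ref{thm:(determinatal-moser...)...For-the} to control the log-Laplace transform, and optimization over $t$. The paper's version is more compressed (it sets $Y=S_N/(N\lambda)$ and then simply takes $t=1/c$), while you make explicit the two-sided split, the centering, and the approximation argument extending the bound from smooth $\phi$ to $d\phi\in L^{2}$ --- but the underlying Chernoff argument is identical.
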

We will also show that the Moser-Trudinger inequality in Theorem \ref{thm:(determinatal-moser...)...For-the}
is in fact an asymptotic \emph{equality} in the following sense:
\begin{thm}
\label{thm: szeg=0000F6type}(strong Szegö type theorem). Let $X$
be a genus $g$ Riemann surface and consider the canonical determinantal
point process on $X$ with $N$ particles. Let $\phi$ be a complex
valued function on $X$ such that \textup{its differential is in $L^{2}(X,\C),$
i.e. $\phi$ has finite Dirichlet norm. Then} \textup{\[
\log\E(e^{-(\sum_{i=1}^{N}(\phi(x_{i})-\int_{X}\phi\omega))})\rightarrow\frac{1}{2}\int_{X}d\phi\wedge d^{c}\phi\]
 as $N\rightarrow\infty$ and the same convergence holds when the
exponent above is replaced with the fluctuation of the linear statistic
of $\phi.$ }
\end{thm}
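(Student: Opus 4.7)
The plan is to match the sharp upper bound of Theorem~\ref{thm:(determinatal-moser...)...For-the} with a matching lower bound obtained by twice integrating the variance of the centered linear statistic along a one-parameter family of tilted determinantal measures.

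For the upper bound, apply Theorem~\ref{thm:(determinatal-moser...)...For-the} directly to $\phi$: since $\frac{1}{1+(1-g)/N}+\epsilon_{N}\to 1$ and $\epsilon_{N}$ is exponentially small,
\[
\limsup_{N\to\infty}\log\E\!\left(e^{-(\sum_i\phi(x_i)-\int_X\phi\omega)}\right)\le \tfrac{1}{2}\int_X d\phi\wedge d^c\phi.
\]
For complex-valued $\phi=u+iv$, write the exponent as $-(Y_u+iY_v)$ and use the bilinearity of the pairing $d\phi\wedge d^c\phi$ to reduce to joint real statements; equivalently, $t\mapsto\log\E(e^{-tY_\phi})$ extends holomorphically in $t$ and the bound propagates from the real case by analytic continuation.

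For the lower bound, set $f_N(t):=\log\E(e^{-tY_\phi})$ with $Y_\phi$ the centered linear statistic. Then $f_N(0)=f_N'(0)=0$ and $f_N''(t)=\mathrm{Var}_{t\phi}(Y_\phi)$ where the variance is computed with respect to the tilted probability measure $\mu^{(N)}_{t\phi}\propto e^{-t\sum_i\phi(x_i)}\mu^{(N)}$. Crucially, $\mu^{(N)}_{t\phi}$ is again a determinantal process: it arises from the perturbed Hermitian metric $e^{-t\phi}\|\cdot\|^2$ on $L$ whose curvature is $\omega+t\,dd^c\phi$. The standard determinantal variance identity reads
\[
\mathrm{Var}_{t\phi}(Y_\phi)=\tfrac{1}{2}\!\iint_{X\times X}\!|\phi(x)-\phi(y)|^2|K_N^{(t\phi)}(x,y)|^2 d\nu(x)d\nu(y),
\]
and combining the exponential off-diagonal decay from the Bergman kernel asymptotics of Prop.~\ref{pro:bergm as} (which remain valid, with identical error structure, for the deformed metric) with the \emph{conformal invariance} of the Dirichlet norm in real dimension two, one shows for smooth $\phi$ that $f_N''(t)\to \int_X d\phi\wedge d^c\phi$ pointwise in $t\in[0,1]$, with a uniform quadratic upper bound on $f_N''$ supplied by Theorem~\ref{thm:(determinatal-moser...)...For-the}. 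Dominated convergence then gives
\[
f_N(1)=\int_0^1\!\!\int_0^s f_N''(u)\,du\,ds\longrightarrow \tfrac{1}{2}\int_X d\phi\wedge d^c\phi,
\]
matching the upper bound.

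To extend from smooth $\phi$ to $\phi$ of merely finite Dirichlet norm, take smooth $\phi_\varepsilon$ with $\|d(\phi-\phi_\varepsilon)\|_X\to 0$ and apply H\"older's inequality
\[
\E(e^{-Y_\phi})\le \E(e^{-pY_{\phi_\varepsilon}})^{1/p}\,\E(e^{-qY_{\phi-\phi_\varepsilon}})^{1/q},\qquad \tfrac{1}{p}+\tfrac{1}{q}=1,
\]
controlling each factor via Theorem~\ref{thm:(determinatal-moser...)...For-the} and letting $p\to 1^+$; the reverse inequality follows symmetrically. The main obstacle is identifying $\lim_N f_N''(t)$ uniformly in the tilt parameter $t\in[0,1]$: this requires the Bergman kernel asymptotics of Prop.~\ref{pro:bergm as} to hold for the non-constant-curvature metric $\omega+t\,dd^c\phi$, and it is the two-dimensional conformal invariance of the Dirichlet form that makes the limit independent of $t$ and hence permits the twofold integration in $u$ and $s$.
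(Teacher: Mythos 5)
Your overall two-step plan (upper bound from the Moser--Trudinger inequality; lower bound by integrating the second derivative of the log-moment generating function along the family of tilted determinantal measures, then a density argument for non-smooth $\phi$) captures the same circle of ideas as the paper, and the double-integration of $f_N''$ is precisely the argument sketched in the paper's Remark~\ref{rem:lower bd}. However, there are two substantive problems.

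First, you assert that the Bergman kernel asymptotics of Proposition~\ref{pro:bergm as} ``remain valid, with identical error structure, for the deformed metric'' $\omega+t\,dd^{c}\phi$. This is false. The exponentially small error terms in Proposition~\ref{pro:bergm as} and Theorem~\ref{thm:bergman kernel as} are a consequence of the constant-curvature assumption on $\omega$ (they come from the local isometry exploited in the proof); the perturbed metric $\omega+t\,dd^{c}\phi$ has variable curvature as soon as $\phi$ is non-harmonic, and the exponentially sharp asymptotics do not hold for it. What is true --- and is all that the paper's Remark~\ref{rem:lower bd} invokes --- is the \emph{universal leading-order} near-diagonal scaling limit $K_{k\Phi+t\phi}(x+z/k^{1/2},x+w/k^{1/2})=ke^{z\bar w}+o(1)$, uniformly in $t$, which holds for any smooth positive weight by the standard Tian--Zelditch--Catlin theory. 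Your argument needs to be restated using only this leading-order universality (and indeed the conformal invariance of the Dirichlet form in dimension one is then what makes $\lim_N f_N''(t)$ independent of $t$). Relatedly, your claim that Theorem~\ref{thm:(determinatal-moser...)...For-the} supplies a ``uniform quadratic upper bound on $f_N''$'' is not immediate: the theorem bounds $f_N(t)$, and while $f_N$ is convex, passing from a quadratic upper bound on $f_N$ to a pointwise bound on $f_N''$ requires an additional argument.

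Second, the H\"older-inequality passage from smooth to finite-Dirichlet-norm $\phi$ hides a triple limit ($N\to\infty$, $\varepsilon\to 0$, $p\to 1^{+}$) whose interchange is not established. In the lower-bound direction H\"older gives control on $\log\E(e^{-pY_\phi})$ for $p>1$, not on $\log\E(e^{-Y_\phi})$ itself, and you do not explain how to let $p\downarrow 1$ at the level of the $N\to\infty$ limit. The paper avoids this entirely by first proving, via Markov applied to the Moser--Trudinger bound, a uniform upper bound on the variance $\E(|\tilde\phi|^{2})$ in terms of $\|d\phi\|_{X}^{2}$ and $\|\phi\|_{L^{1}/\R}^{2}$, and then using the elementary estimate $|\E(e^{it\tilde\phi_j})-\E(e^{it\tilde\phi})|^{2}\le\E(|\widetilde{\phi_j-\phi}|^{2})$ to transfer the CLT from smooth approximants to general $\phi$; the extension to complex $t$ (and complex $\phi$) then follows from Vitali's theorem using the locally uniform bounds supplied by Theorem~\ref{thm:(determinatal-moser...)...For-the}. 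You should replace the H\"older step by this variance-based approximation argument.
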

In \cite{berm2} it was shown that, as long as $\omega>0$ and $\phi$
is\emph{ smooth} an analogue of the convergence above holds in any
dimension $n$ if the conformally invariant norm above is replaced
by the Dirichlet norm wrt $\omega.$ But it should be emphasized that
when $n>1$ the convergence does \emph{not} hold if one relaxes the
smoothness assumption on $\phi$ to allowing a gradient in $L^{2}$
(see section \ref{sub:A-brief-acount} for counter examples). 

The previous theorem may be equivalently formulated as the following
Central Limit Theorem (CLT), valid under minimal regularity assumptions:
\begin{cor}
\textup{(CLT) The}\emph{ fluctuations} $\delta_{N}-\E(\delta_{N})$
\emph{of the empirical measure $\delta_{N}$ converge in distribution
to the the Laplacian (or rather $dd^{c})$ of the Gaussian free field
(GFF). }\textup{In other words, for any $\phi\in L^{1}(X)$ with $d\phi\in L^{2}(X)$
the fluctuations}\emph{ }\[
\sum_{i=1}^{N}(\phi(x_{i})-\E(\phi(x_{i}))\]
 \emph{of the corresponding linear statistics converge in distribution
to a centered normal random variable with variance $\left\Vert d\phi\right\Vert _{X}^{2}.$} 
\end{cor}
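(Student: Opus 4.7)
The plan is to derive the CLT directly from Theorem~\ref{thm: szeg=0000F6type} by standard characteristic-function methods. The key point is that Theorem~\ref{thm: szeg=0000F6type} is formulated for \emph{complex-valued} test functions $\phi$ with finite Dirichlet norm, so one recovers the full characteristic function of the fluctuation simply by substituting a purely imaginary multiple of a real test function.

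Concretely, fix a real-valued $\phi$ with $d\phi \in L^{2}(X)$ and a real parameter $t$, and apply the fluctuation form of Theorem~\ref{thm: szeg=0000F6type} to the complex test function $-it\phi$. Since
\[
\int_{X} d(-it\phi)\wedge d^{c}(-it\phi) \;=\; -t^{2}\int_{X} d\phi\wedge d^{c}\phi \;=\; -t^{2}\|d\phi\|_{X}^{2},
\]
the theorem yields
\[
\log \E\!\left(\exp\!\left(it\sum_{j=1}^{N}\bigl(\phi(x_{j})-\E\phi(x_{j})\bigr)\right)\right) \;\longrightarrow\; -\tfrac{1}{2}\,t^{2}\|d\phi\|_{X}^{2},
\]
which is precisely the logarithm of the characteristic function of a centered Gaussian with variance $\|d\phi\|_{X}^{2}$ evaluated at $t$. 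Pointwise convergence of characteristic functions together with L\'evy's continuity theorem then gives the one-dimensional marginal statement: for each $\phi$, the fluctuation $\sum_{j}(\phi(x_{j})-\E\phi(x_{j}))$ converges in distribution to a centered normal random variable with variance $\|d\phi\|_{X}^{2}$.

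To upgrade to the distributional convergence of the random measure $\delta_{N}-\E(\delta_{N})$ itself (viewed as a random element of the dual of the Dirichlet space), I would invoke the Cram\'er--Wold device: for any finite collection $\phi_{1},\dots,\phi_{m}$ with $d\phi_{k}\in L^{2}(X)$ and real coefficients $t_{1},\dots,t_{m}$, applying the same argument to $\phi:=\sum_{k} t_{k}\phi_{k}$ shows that every real linear combination of the $m$ corresponding fluctuations converges to a centered Gaussian with variance $\|\sum_{k} t_{k}\,d\phi_{k}\|_{X}^{2}$. Polarization then identifies the limiting covariance structure as the Dirichlet pairing $\int_{X} d\phi_{j}\wedge d^{c}\phi_{k}$, which is precisely the covariance of $dd^{c}\Phi$ tested against $\phi_{j},\phi_{k}$, where $\Phi$ denotes the Gaussian free field on $(X,\omega)$ (by formal integration by parts, $\langle dd^{c}\Phi,\phi\rangle = -\int d\phi\wedge d^{c}\Phi$ is Gaussian with variance $\|d\phi\|_{X}^{2}$).

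Once Theorem~\ref{thm: szeg=0000F6type} is in hand, there is essentially no obstacle: the deduction is purely classical probability. The only points requiring some attention are (i) that one genuinely uses the complex-valued version of the Szego-type theorem, making the imaginary-exponent regime available rather than only the Laplace transform, and (ii) that the limiting Gaussian process is identified unambiguously with $dd^{c}$ of the GFF by matching covariances under the conformally invariant Dirichlet-norm convention fixed at the start of this subsection.
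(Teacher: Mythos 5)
Your argument is correct and matches the route the paper intends: the corollary follows from Theorem~\ref{thm: szeg=0000F6type} by substituting $-it\phi$ for the complex test function in the fluctuation version, which gives pointwise convergence of characteristic functions $\E(e^{it\tilde\phi})\to e^{-\frac{1}{2}t^{2}\|d\phi\|_{X}^{2}}$, and then applying L\'evy's continuity theorem (indeed the paper's own proof of Theorem~\ref{thm: szeg=0000F6type} establishes exactly this characteristic-function convergence for real $t$ as an intermediate step, equation~\ref{eq:pf of clt}). The Cram\'er--Wold upgrade to the full random-measure statement is a standard and correct addendum consistent with the paper's identification of the limit as $dd^{c}$ of the GFF.
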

The GFF is also called the \emph{massless bosonic free field }in the
physics litterature. Heuristically, this is a random function wrt
the Gaussian measure on the Hilbert space of all $\phi$ (mod $\R)$
equipped with the Dirichlet norm $\left\Vert d\phi\right\Vert _{X}^{2}/2.$
For the precise definition of the GFF and its Laplacian see \cite{sh}
(Prop 2.13 and Remark 2.14) and for a comparison with the physics
litterature on Coulomb gases see section 1.3 in \cite{s-s}.

\subsection{Relations to previous results}

\subsubsection*{Exponential concentration }

A determinantal Moser-Trudinger (M-T) inequality on $S^{2},$ but
with non-optimal constants was first obtained by Fang \cite{f} building
on previous work by Gillet-Soulé concerning the $S^{1}-$invariant
case \cite{g-s}, which in turn used the classical Moser-Truding (one-particle)
inequality. The motivation came from arithmetic (Arakelov) geometry
and spectral geometry. The optimal constants on $S^{2}$ were obtained
by the author in \cite{berm5} using methods further developed in
the present paper. It would be interesting to know for which other
(determinantal) random point processes similar inequalities hold,
i.e. upper bounds on the logarithmic moment generating function of
the linear statistic defined by $\phi(x)$ in terms of the Dirichlet
norm $\left\Vert d\phi\right\Vert _{X}^{2}.$ The only previously
known case seems to be the case when the measure measure $\nu$ is
the invariant measure on $S^{1}$ (and $\omega=0),$ corresponding
to the standard unitary random matrix ensemble. Then the corresponding
inequalities follow from a simple monotonicity argument going back
to the classical work of Szegö (see for example \cite{j00} and references
therein). Recently, several works have been concerned with a weaker
form of such moment inequalities where the role of the Dirichlet norm
is played by the Lipschitz norm. These inequalities fit into a circle
of ideas sourrounding the {}``concentration of measure phenomena''
in high dimensions. We refer to the survey \cite{gu} and the book
\cite{le} for precise references. Formulated in the present settings
these latter inequalities hold for $\nu=1_{\R}e^{-v(x)}dx$ with $v(x)$
strictly convex (satisfying $d^{2}v/d^{2}x>C).$ As explained in \cite{gu},
by the Bakry-Emery theorem and Klein's lemma, the corresponding point
processes satisfy a log Sobolev inequality, which by Herbst's argument
yields the desired inequality on the logarithmic moment generating
function

\subsubsection*{Szegö type limits and CLT:s}

The convergence in Theorem \ref{thm: szeg=0000F6type} (and its Corollary)
in the case when $X=S^{2}$ was first obtained by Ryder-Virag \cite{r-v2},
using combinatorial (and diagrammatic) arguments to estimate the cumulants
(i.e. the coefficients in the Taylor expansion of the logarithmic
moment generating function), combined with estimates on the 2-point
functions. They also obtained analagous results for the homogenous
determinantal point processes on the other two simply connected Riemann
surfaces, i.e on $\C$ and $\H.$ However, in the latter cases the
processes have an infinite number of particles and are hence different
from the sequence of non-homogenous ones considered in the present
paper on a \emph{compact} Riemann surfaces of genus $g>0.$ In the
circle case (refered to above), assuming $\phi$\emph{ smooth, }the
analogue of the convergence in Thm \ref{thm: szeg=0000F6type} is
the celebrated Szegö strong limit theorem from 1952. In this case
the Dirichlet norm of $\phi$ has to be replaced by the Dirichlet
norm of the harmonic extension of $\phi$ to the unit-disc. The result
of Szegö was motivated by Onsager's work on phase transitions for
the 2D Ising model. The case of a general $\phi$ was eventually shown
by Ibragimov \cite{ib}. A new proof was then given by Kurt Johansson
\cite{j00}, who also pointed out the relation to a CLT for the unitary
random matrix ensemble. See also \cite{d-e} for generalizations of
the latter CLT using explicit moment calculations and harmonic analysis.
We refer to the survey \cite{sim} for an interesting account of the
history of Szegö's theorem. It is also interesting to compare the
appearence of exponentially small error terms in the inequaties \ref{thm:(determinatal-moser...)...For-the}
with the exponentially small error terms obtained in \cite{sim} in
the context of the classical strong Szegö theorem. The proof in the
Riemann surface cases in the present paper is partly inspired by the
argument in \cite{j00}, where the determinantal Moser-Trudinger inequalities
on $S^{1}$ (refered to above) were used to reduce the\emph{ upper}
bound in the convergence to the smooth case, also using analytic continuation.
There are also similar convergence results for other weighted measures
in the plane appearing in Random Matrix Theory, but regularity assumptions
on $\phi$ are then imposed \cite{j00,j0}. It should be emphasizes
that the classical strong Szegö theorem has previously been extended
and extensively studied in various other directions, notably in the
context of pseudo-differential operators and in particular Schrödinger
operators (see \cite{sim} and references therein). Compared to the
present paper the role of Schrödinger operators is here played by
\emph{magnetic} Schrödinger operators. Finally, it may also be interesting
to compare the CLT above with the central limit theorem and variance
asymptotics obtained very recently in \cite{n-s} for non-smooth linear
statistics in the different context of random point processes defined
by zeroes of Gaussian entire functions.

\subsection*{Acknowledgment}

The author is grateful to Gerard Freixas i Montplet, Balint Virag,
Manjunath Krishnapur and Steve Zelditch for helpful comments and their
interest in this work.

\subsection{\label{sub:Notation-and-general}Notation%
\footnote{general references for this section are the books \cite{gr-ha,de4}.
See also \cite{b-v-} for the Riemann surface case.%
}}

Let $L\rightarrow X$ be a holomorphic line bundle over a compact
complex manifold $X.$

\subsubsection{Metrics on $L$}

We will fix, once and for all, a Hermitian metric $\left\Vert \cdot\right\Vert $
on $L.$ Its curvature form times the normalization factor $\frac{i}{2\pi}$
will be denoted by $\omega.$ The normalization is made so that $[\omega]$
defines an\emph{ integer} cohomology class, i.e. $[\omega]\in H^{2}(X,\Z).$
The local description of $\left\Vert \cdot\right\Vert $ is as follows:
let $s$ be a trivializing local holomorphic section of $L,$ i.e.
$s$ is non-vanishing an a given open set $U$ in $X.$ Then we define
the local\emph{ weight }$\Phi$ of the metric $\left\Vert \cdot\right\Vert $
by the relation \[
\left\Vert s\right\Vert ^{2}=e^{-\Phi}\]
The (normalized) curvature current $\omega$ may now by defined by
the following expression: \[
\omega=\frac{i}{2\pi}\partial\overline{\partial}\Phi:=dd^{c}\Phi,\]
 (where we, as usual, have introduced the real operator $d^{c}:=i(-\partial+\overline{\partial})/4\pi$
to absorb the factor $\frac{i}{2\pi}).$ The point is that, even though
the function $\phi$ is merely locally well-defined the form $\omega$
is globally well-defined (as any two local weights differ by $\log|g|^{2}$
for $g$ a non-vanishing holomorphic function). The current $\omega$
is said to be\emph{ positive} if the weight $\Phi$ is \emph{plurisubharmonic
(psh). }If $\Phi$ is smooth this simply means that the\emph{ }Hermitian
matrix $\omega_{ij}=(\frac{\partial^{2}\Phi}{\partial z_{i}\partial\bar{z_{j}}})$
is positive definite (i.e. $\omega$ is a Kähler form) and in general
it means that, locally, $\Phi$ can be written as a decreasing limit
of such smooth functions.

\subsubsection{Holomorphic sections of $L$}

We will denote by $H^{0}(X,L)$ the space of all global holomorphic
sections of $L.$ In a local trivialization as above any element $\Psi$
in $H^{0}(X,L)$ may be represented by a local holomorphic function
$f,$ i.e. \[
\Psi=fs\]
 The squared point-wise norm $\left\Vert \Psi\right\Vert ^{2}(x)$
of $\Psi,$ which is a globally well-defined function on $X,$ may
hence be locally written as \[
\left\Vert \Psi\right\Vert ^{2}(x)=(|f|^{2}e^{-\Phi})(x)\]
It will be convenient to take the curvature current $\omega$ as our
geometric data associated to the line bundle $L.$ Strictly speaking,
it only determines the metric $\left\Vert \cdot\right\Vert $ up to
a multiplicative constant but all the geometric and probabilistic
constructions that we will make are independent of the constant.

\subsubsection{Metrics and weights vs $\omega-$ psh functions}

Having fixed a continuous Hermitian metric $\left\Vert \cdot\right\Vert $
on $L$ with (local) weight $\Phi_{0}$ any other metric may be written
as \[
\left\Vert \cdot\right\Vert _{\phi}^{2}:=e^{-\phi}\left\Vert \cdot\right\Vert ^{2}\]
for a continuous function $\phi$ on $X,$ i.e. $\phi\in C^{0}(X).$
In other words, the local weight of the metric $\left\Vert \cdot\right\Vert _{\phi}$
may be written as $\Phi=\phi+\Phi_{0}$ and hence its curvature current
may be written as \[
dd^{c}\Phi=\omega+dd^{c}\phi:=\omega_{\phi}\]
This means that we have a correspondence between the space of all
(singular) metrics on $L$ with positive curvature current and the
space $PSH(X,\omega)$ of all upper-semi continuous functions on $X$
such that $\omega_{\phi}\geq0$ in the sense of currents. Note for
example, that if $\Psi\in H^{0}(X,L)$ then $\log\left\Vert \Psi\right\Vert ^{2}\in PSH(X,\omega).$
In particular, in the Riemann surface case $PSH(X,\omega)(=SH(X,\omega))$
is the space of all usc functions $\phi$ such that $\Delta_{\omega}\phi\geq-1,$
where $\Delta_{\omega}$ denotes the Laplacian wrt the Riemannian
metric corresponding to $\omega,$ i.e. \[
\Delta_{\omega}\phi=(dd^{c}\phi)/\omega\]
(where by our normalizations $\Delta_{\omega}=\frac{1}{4\pi}(\frac{\partial^{2}}{\partial^{2}x}+\frac{\partial^{2}}{\partial^{2}y})$
in the case when $\omega$ is locally Euclidean).

\section{\label{sec:Exponential-concentration-(Proof}Canonical point processes
(Proofs of the main results)}

For a general Kähler manifold $(X,\omega)$ there is well-known energy
type functional which may be written as \begin{equation}
\mathcal{E}_{\omega}(\phi):=\frac{1}{(n+1)!V}\sum_{j=0}^{n}\int_{X}\omega_{\phi}^{j}\wedge(\omega)^{n-j}\label{eq:bi-energy}\end{equation}
Up to normalization it can be defined as the primitive of the Monge-Ampère
operator seen as a one-form on the space of all Kähler potentials
$\phi$ (and it was in this form it was first introduced by Mabuchi
in Kähler geometry; see \cite{berm6} and references therein). This
means that $d\mathcal{E}_{\omega}(\phi)=\omega_{\phi}^{n}/V$ (in
the sense of formula \ref{eq:first variational form} below)

We now turn to the case when $X$ is a Riemann surface, i.e. $n=1.$
In particular, after an integration by parts $\mathcal{E}_{\omega}$
can then be expressed in terms of the usual Dirichlet energy on a
Riemann surface: \begin{equation}
V\mathcal{E}_{\omega}(\phi)=-\frac{1}{2}\int d\phi\wedge d^{c}\phi+\int\phi\omega\label{eq:eomega as dirch}\end{equation}
Following \cite{berm5} it will also be convenient to consider a variant
of the setting given in the introduction of the paper where the Hilbert
space is the space $H^{0}(X,kL+K_{X})$ of holomorphic one-form with
values in $L$ equipped with the canonical Hermitian product induced
by the weight $\Phi$ on $L:$ . \begin{equation}
\left\langle \Psi,\Psi\right\rangle _{X}:=i\int_{X}\Psi\wedge\bar{\Psi}e^{-k\Phi}\label{eq:herm norm in adjoint setting}\end{equation}
(equivalently, one picks a volume form $\mu$ on $X$ and takes $1/\mu$
as the metric on $K_{X}).$ We will call this the \emph{adjoint setting
}and the corresponding process on $X$ \emph{the adjoint determinantal
point process at level $k.$ }Anyway, as explained below, the adjoint
and the canonical point processes coincide when the curvature form
$\omega$ of $\Phi$ has constant curvature.\emph{ }We note that if
$\delta_{N}$ denotes the empirical measure for the adjoint process
with $N$ particles then \[
\E(\delta_{N})=i\sum_{i=1}^{N}\Psi_{i}\wedge\bar{\Psi}_{i}e^{-\Phi},\]
for an orthonormal base $(\Psi_{i}),$ i.e. $\E(\delta_{N})$ is equal
to the restriction to the diagonal of the Bergman kernel $K_{\Phi}(x,y)$
of $H^{0}(X,kL+K_{X});$ see section \ref{sec:Bergman-kernel-asymptotics}. 
\begin{prop}
\label{pro:bergm as}Let $L\rightarrow X$ be a line bundle of degree
$V$ over a Riemann surface of genus $g.$ Assume that $L$ is equipped
with a metric $e^{-\Phi}$ with strictly positive curvature form $\omega(=dd^{c}\Phi)$
such that the Riemannian metric on $X$ defined by $\omega$ has constant
scalar curvature $R(=(2-2g/V).$ Then the canonical determinantal
point processes associated to $kL$ (with $N(=N_{k})$ particles)
satisfy \[
\sup_{X}|\frac{\E_{N}(\delta_{N}/N)}{\omega/V}-1|\leq\epsilon_{N},\]
 where $\epsilon_{N}$ is exponentially small, i.e. $\epsilon_{N}\leq Ce^{-\delta N}.$
In the case $g=0$ we have $\epsilon_{N}=0$ and when $g>0$ the constant
$\delta$ can be taken to be arbitrarily close to \begin{equation}
\frac{2}{RV}\log(\cosh(\sqrt{\frac{\pi R}{2}}I(X))\label{eq:form for delta in prop}\end{equation}
 where $I(X))$ is the injectivity radius of $X$ (which coincides
with half the length of the shortest geodesic on $X).$ \end{prop}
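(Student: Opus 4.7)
The plan is to work throughout in the adjoint setting, since the adjoint and canonical point processes coincide for a constant curvature metric, so any exponentially small error for the expected empirical measure obtained in the adjoint setting transfers directly to the canonical one. In the adjoint setting $\E_N(\delta_N)$ equals $\omega$ times the Bergman function $B_k$ of $H^0(X, kL+K_X)$ for the Hermitian product~\eqref{eq:herm norm in adjoint setting}; thus it suffices to show that $\sup_X |V B_k / N - 1| \leq \epsilon_N$ with $\epsilon_N$ exponentially small in $N$.

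I would first dispose of the genus zero case separately. When $X = S^2$ the isometry group $SU(2)$ of $(X,\omega)$ acts transitively on $X$ and lifts to a unitary action on $(H^0(X, kL+K_X), \langle\cdot,\cdot\rangle_X)$. Since $B_k$ is defined intrinsically from this Hilbert structure it is $SU(2)$-invariant, hence constant on $X$; integrating against $\omega$ forces $B_k \equiv N/V$ exactly, and $\epsilon_N = 0$.

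For $g \geq 1$ I would pass to the universal cover $\tilde X$, equal to $\C$ when $g = 1$ and to $\H$ when $g \geq 2$, equipped with the pulled-back constant curvature metric. Writing $X = \Gamma \backslash \tilde X$, sections of $kL + K_X$ on $X$ lift to $\Gamma$-automorphic holomorphic one-forms on $\tilde X$ with values in the pulled-back line bundle equipped with the metric $e^{-\tilde\Phi}$, and the Bergman kernel can be represented by the Poincar\'e averaging identity
\begin{equation*}
K_\Phi(x, y) \;=\; \sum_{\gamma \in \Gamma} \gamma^* K_{\tilde \Phi}(\tilde x, \gamma \cdot \tilde y),
\end{equation*}
where $K_{\tilde\Phi}$ is the reproducing kernel of the appropriate $L^2$ space on the universal cover (the Fock-type space in the flat case, and the Bergman space of weight $k$ automorphic forms on $\H$ in the hyperbolic case). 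By homogeneity of the constant curvature metric under $\mathrm{Aut}(\tilde X)$, the $\gamma = e$ term restricted to the diagonal is constant in $\tilde x$, and a direct computation identifies this constant with $N/V$ up to an exponentially small correction coming from the remaining terms.

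The main technical step is to bound the $\gamma \neq e$ contributions. Here I would use an explicit formula for $K_{\tilde\Phi}$---a Gaussian expression in the flat case and a power-of-cross-ratio expression in the hyperbolic case---which yields pointwise decay of the form $|K_{\tilde\Phi}(\tilde x, \gamma\tilde x)|_\Phi \leq C e^{-k\, \psi(d(\tilde x, \gamma\tilde x))}$ for an explicit function $\psi$ adapted to the constant curvature. Combined with the injectivity radius bound $d(\tilde x, \gamma\tilde x)\geq 2 I(X)$ for every nontrivial $\gamma$ and a standard lattice counting estimate controlling the volume growth of $\Gamma$-orbits in $\tilde X$, this yields the overall estimate $\epsilon_N \leq C e^{-\delta N}$. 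The main obstacle will be extracting the sharp constant~\eqref{eq:form for delta in prop}: this requires carefully tracking the $\cosh$ factor coming from the hyperbolic distance inside the reproducing kernel and matching it against the contribution of the shortest nontrivial deck translation, so that the geometry of the shortest closed geodesic on $X$ feeds directly into the decay rate.
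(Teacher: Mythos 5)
Your proposal is correct in outline, but it takes a genuinely different route from the one in the paper. The paper derives the Proposition from its Theorem~\ref{thm:bergman kernel as}, whose proof is local: it writes the weight in a model constant-curvature form near a point, produces an explicit local Bergman kernel modulo $\mathcal{O}(e^{-\delta k})$ by a mean-value-plus-explicit-integral computation, and then globalizes by solving a $\bar\partial$-equation with the H\"ormander--Kodaira $L^2$-estimate. Your approach is the Poincar\'e periodization on the universal cover: write $X=\Gamma\backslash\tilde X$, express $K_\Phi$ as a $\Gamma$-average of the (constant-on-the-diagonal, homogeneous) kernel on $\tilde X$, isolate the $\gamma=e$ term, and bound the off-diagonal tail via injectivity radius plus lattice counting. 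The paper actually mentions your route explicitly in Section~\ref{sec:Bergman-kernel-asymptotics}, attributing it to Zelditch (with the torus case to~\cite{fau}), so the two are known alternatives.

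Each buys something the other does not. Your periodization argument is clean and directly tied to the hyperbolic/flat geometry, and your $SU(2)$-invariance argument for $g=0$ is an especially efficient way to get the vanishing error. On the other hand, it needs the curvature to be constant \emph{globally} so that $X$ is a quotient of a homogeneous model; the paper's local-kernel-plus-$\bar\partial$-globalization only needs constant curvature near the point in question (the positivity of the bundle away from that neighborhood enters only through the $L^2$-estimate), which is what enables the extension in the paper's Corollary to metrics with conical singularities and cusps. Also note that the automatic normalization you appeal to (that the leading constant matches $N/V$ because $\int_X B_k\,\omega=N$ exactly) is the same bookkeeping used in the paper's proof via Riemann--Roch, so that part is on solid ground. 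One technical point you should not gloss over if you flesh this out: for a general $L$ of degree $V$, the lift of $kL+K_X$ to $\tilde X$ requires choosing a multiplier system/automorphy factor, and the reproducing kernel on $\tilde X$ must be the one for the \emph{weighted} $L^2$ space that matches that choice; this is standard but not free, and getting the precise leading constant $k+R/2$ (rather than just $k+O(1)$) out of the $\gamma=e$ term requires fixing it carefully.
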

\begin{proof}
To simplify the notation we set $V=1$ (the case $V\neq1$ follows
from trivial scalings). First we recall that in the general setting
where \ref{eq:general l2 norm intro} defines a Hilbert norm on $H^{0}(X,kL)$
we have the basic relation $\E_{N}(\delta_{N})=B_{k}(x)d\nu$ where
$B_{k}(x)$ is the point-wise norm of the corresponding Bergman kernel
and hence Theorem \ref{thm:bergman kernel as} below (or its corollary)
gives $\E_{N}(\delta_{N})=k+R/2+\mathcal{O}(e^{-\delta k}).$ Since,
$N=\int B_{k}(x)d\nu$ it follows that $N=k+R/2$ for $k>>1$ (a special
case of the Riemann-Roch theorem) concluding the proof of the proposition. 
\end{proof}

\subsection{Proof of Theorem \ref{thm:(determinatal-moser...)...For-the} (determinantal
Moser-Trudinger inequality)}

We will start by proving the following non-asymptotic inequality.
\begin{prop}
\label{pro:moser for omega psh}Let $L\rightarrow X$ be a line bundle
over a Riemann surface equipped with a smooth metric with strictly
positive curvature form $\omega.$ Consider the corresponding adjoint
determinantal point process. Then the following estimate holds \textup{\[
\frac{1}{N}\log\E(e^{-\phi})-\mathcal{E}_{\omega}(\phi)\leq\sup_{X}|\frac{\E(\delta/N)}{\omega/V}-1|(-\mathcal{E}_{\omega}(\phi-\sup_{X}\phi))\]
 for any smooth function $\phi$ satisfying }$\omega_{\phi}:=dd^{c}\phi+\omega\geq0,$
\textup{where $\delta(=\delta_{N})$ denotes the empirical measure
of the process with $N$ particles.}
\end{prop}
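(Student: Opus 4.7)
\emph{Approach.} I would interpolate between $0$ and $\phi$ along a weak $C^{1,1}$ Mabuchi geodesic $\{\phi_t\}_{t\in[0,1]}$ in $\mathrm{PSH}(X,\omega)$ (with $\phi_0=0$ and $\phi_1=\phi$), and compare the functionals $\mathcal{L}(\phi_t):=\frac{1}{N}\log\E(e^{-\phi_t})$ and $\mathcal{E}_\omega(\phi_t)$ along this path. The proof hinges on a clean dichotomy: Berndtsson's positivity of direct images forces $t\mapsto\mathcal{L}(\phi_t)$ to be concave on $[0,1]$, while the very definition of the Mabuchi geodesic makes $t\mapsto\mathcal{E}_\omega(\phi_t)$ affine. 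Thus $(\mathcal{L}-\mathcal{E}_\omega)(\phi_t)$ is concave, vanishes at $t=0$, and its value at $t=1$ is therefore bounded above by its tangent-line derivative at $t=0$.

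\emph{Main steps.} First, Chen's theorem (solving the degenerate equation $(\pi^*\omega+dd^c_{t,x}\phi_t)^2=0$ with the given boundary data) supplies the $C^{1,1}$ geodesic; its pointwise convexity in $t$, which follows from $\ddot\phi_t=|\nabla\dot\phi_t|^2_{\omega_{\phi_t}}\geq 0$, yields the crucial velocity estimate $\dot\phi_0\leq\phi$ and the $L^\infty$ bound $\|\dot\phi_0\|_\infty\leq\|\phi\|_\infty$. Second, Berndtsson's theorem applied to the rescaled psh subgeodesic $\phi_t/k$ (which lies in $\mathrm{PSH}(X,\omega)$ since $\phi_t$ does), viewed as a variable weight on $kL$, yields concavity of $t\mapsto\log\det\langle\Psi_i,\Psi_j\rangle_{\Phi+\phi_t/k}$, hence of $\mathcal{L}(\phi_t)$. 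Third, the affinity of $\mathcal{E}_\omega$ together with $\mathcal{E}_\omega(0)=0$ supplies the key integral identity $V\mathcal{E}_\omega(\phi)=\int_X\dot\phi_0\,\omega$. Combined with the first-variation formula $\mathcal{L}'(0)=-\frac{1}{N}\int_X\dot\phi_0\,\E(\delta_N)$, the tangent-line bound reduces the claim to controlling the error integral $\int_X\dot\phi_0\bigl[\E(\delta_N)/N-\omega/V\bigr]$.

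\emph{Bergman kernel step.} Proposition~\ref{pro:bergm as} delivers precisely the pointwise estimate $|\E(\delta_N)/N-\omega/V|\leq\eta\cdot\omega/V$, with $\eta$ the prefactor in the statement. Since both $\E(\delta_N)/N$ and $\omega/V$ are probability measures, the signed measure $r:=\E(\delta_N)/N-\omega/V$ has vanishing total mass, so $\int\dot\phi_0\,r=\int(\dot\phi_0-c)\,r$ for any constant $c$. Taking $c=\sup_X\phi$ and using $\dot\phi_0\leq\phi\leq\sup_X\phi$, the integrand $\sup_X\phi-\dot\phi_0$ is non-negative; the bound $|r|\leq\eta\cdot\omega/V$ together with the affine identity then yields
\[
\Bigl|\int_X\dot\phi_0\,r\Bigr|\leq\eta\int_X(\sup_X\phi-\dot\phi_0)\,\frac{\omega}{V}=\eta\bigl(\sup_X\phi-\mathcal{E}_\omega(\phi)\bigr)=\eta\bigl(-\mathcal{E}_\omega(\phi-\sup_X\phi)\bigr),
\]
where the last equality uses $\mathcal{E}_\omega(\phi-c)=\mathcal{E}_\omega(\phi)-c$. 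Assembling the estimates produces the stated inequality.

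\emph{Main obstacle.} The chief technical difficulty is that Berndtsson's theorem is naturally formulated for smooth psh families, whereas the Mabuchi geodesic $\phi_t$ is merely $C^{1,1}$. I would address this by approximating $\phi_t$ by smooth $\epsilon$-subgeodesics (obtained by perturbing the degenerate Monge-Amp\`ere equation by $\epsilon\omega$), applying Berndtsson's theorem to each, and passing to the limit using the continuity of $\mathcal{L}$ and $\mathcal{E}_\omega$ on $C^0$. A secondary subtle point is the clean identification of $\int(\sup_X\phi-\dot\phi_0)\,\omega/V$ with $-\mathcal{E}_\omega(\phi-\sup_X\phi)$, which rests equally on the geodesic convexity estimate $\dot\phi_0\leq\phi$ and the affineness identity $\int\dot\phi_0\,\omega=V\mathcal{E}_\omega(\phi)$.
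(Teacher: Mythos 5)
Your proof follows the paper's argument essentially step for step: the same Mabuchi geodesic from $0$ to $\phi$, the same concavity of $\frac{1}{N}\log\E(e^{-\phi_t})$ via Berndtsson's positivity of direct images, the same affineness of $\mathcal{E}_\omega$, the same tangent-line bound at $t=0$, and the same use of the geodesic convexity estimate $\dot\phi_0\leq\phi$ to compare against $\sup_X\phi$. Two small points are worth flagging.

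First, the appeal to Proposition \ref{pro:bergm as} in your ``Bergman kernel step'' is both unnecessary and strictly inapplicable: that proposition assumes constant scalar curvature, which the present statement does not. This proposition is a deterministic, non-asymptotic inequality valid for any smooth positively curved metric; the prefactor $\eta:=\sup_X\bigl|\E(\delta/N)/(\omega/V)-1\bigr|$ on the right is \emph{defined} as a supremum, so the pointwise bound $\bigl|\E(\delta_N)/N-\omega/V\bigr|\leq\eta\,\omega/V$ is a tautology and needs no further input. Proposition \ref{pro:bergm as} enters only later, in the proof of Theorem \ref{thm:(determinatal-moser...)...For-the}, to make $\eta$ exponentially small in the canonical constant-curvature setting.

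Second, there is a sign slip in your outline, which mirrors a typo in the proposition as printed. You announce that you will bound $(\mathcal{L}-\mathcal{E}_\omega)(\phi)$ by its tangent line at $t=0$, but the algebraic chain you actually spell out --- the first-variation formula $\mathcal{L}'(0)=-\int_X\dot\phi_0\,\E(\delta_N)/N$, the affineness identity $V\mathcal{E}_\omega(\phi)=\int_X\dot\phi_0\,\omega$, and the reduction to the error integral $\int_X\dot\phi_0\bigl[\E(\delta_N)/N-\omega/V\bigr]$ --- computes the derivative of $\mathcal{L}+\mathcal{E}_\omega$, not of $\mathcal{L}-\mathcal{E}_\omega.$ Indeed the displayed inequality fails already for constant $\phi\equiv c<0$: the left side is $-c-\mathcal{E}_\omega(c)=-2c>0$ while the right side vanishes. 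The paper's own proof works with $\mathcal{F}_\omega=\mathcal{E}_\omega+\frac{1}{N}\log\E(e^{-\phi})$, and the inequality it establishes (and then invokes in ``Step one'' of the main theorem) is
\[
\frac{1}{N}\log\E(e^{-\phi})+\mathcal{E}_\omega(\phi)\;\leq\;\sup_X\Bigl|\frac{\E(\delta/N)}{\omega/V}-1\Bigr|\,\bigl(-\mathcal{E}_\omega(\phi-\sup_X\phi)\bigr).
\]
Your underlying computation proves precisely this; just carry the $+$ sign consistently.
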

The proof is a simple modification of the proof Theorem 33 in \cite{berm5}.
As a courtesy to the reader we will recall the argument in \cite{berm5}.
An important ingredient in the proof is the notion of a \emph{$C^{0}-$geodesic
}(wrt the Mabuchi metric) connecting $\phi_{0}$ and $\phi_{1}$ in
$C^{0}(X)\cap PSH(X,\omega).$ This may be defined as the continuous
path $\phi_{t}(=\phi(\cdot.t))$ connecting $\phi_{0}$ and $\phi_{1}$
in $C^{0}(X)\cap PSH(X,\omega)$ obtained as the upper envelope of
all $S^{1}-$ invariant $\pi^{*}\omega-$psh extensions to the $n+1-$dimensional
complex manifold with boundary $M:=X\times([0,1]\times S^{1})$ (where
$\pi$ denotes the projection from $X\times[0,1[\times S^{1}$ to
$X).$ In particular, $\phi_{t}$ is convex in the real paramter $t\in[0,1]$
and satisfies the homogenous Monge-Ampère equation in the interiour
of $M:$ \begin{equation}
\partial_{t}\partial_{t}\phi_{t}-|(\bar{\partial}_{X}(\partial_{t}\phi_{t})|_{\omega_{\phi_{t}}}^{2}=0\label{eq:geod eq}\end{equation}
in the weak sense of pluripotential theory (see \cite{berm5} for
the precise construction). The following variational formulae are
well-known (and straight-forward): \begin{equation}
(i)\,-\frac{1}{N}d(\log\E(e^{-\phi_{t}})/dt=\left\langle \E_{\omega_{\phi_{t}}}(\delta/N),d\phi_{t}/dt\right\rangle ,\,\,\,(ii)\, d\mathcal{E}_{\omega_{0}}(\phi_{t})/dt=\frac{1}{V}\left\langle \omega_{\phi_{t}},d\phi_{t}/dt\right\rangle \label{eq:first variational form}\end{equation}
Moreover, if $\phi_{t}$ is a $C^{0}-$geodesic in $\mbox{Psh}(X,\omega)$
then \[
(i')\,\,\log\E(e^{-\phi_{t}})\,\mbox{is\,\ concave,\,\,\,\ensuremath{(ii')\,\mathcal{E}_{\omega_{0}}(\phi_{t})\,}is\,\ affine}\]
in the real parameter $t$ (note however that $\log\E(e^{-\phi_{t}})$
is \emph{convex} along \emph{affine} curves; compare Remark \ref{rem:lower bd}
below). The item $(i')$ above follows from the Toeplitz determinant
representation \ref{eq:log moment as toeplitz det} combined with
the positivity results for direct image bundles in \cite{bern2}.
See also the appendix in \cite{berm5} for another proof of $(i')$
using the structure of determinantal point processes. The key point
is the following formula \begin{equation}
\partial_{t}^{2}\log\E(e^{-\phi_{t}})=\mbox{Tr}\left(T[\partial_{t}\partial_{\bar{t}}\phi_{t}]+\left((T[\partial_{t}\phi_{t}])^{2}-T[(\partial_{t}\phi_{t})^{2}]\right)\right),\label{eq:sec deriv of log moment general}\end{equation}
 where $\mbox{Tr}$ denotes the trace and $T[f]$ is the Toeplitz
operator with symbol $f$ wrt the perturbed weight $\Phi+\phi_{t}:$
\[
T[f]=\int_{X}f(y)K_{\Phi+\phi_{t}}(\cdot,y)(=\Pi_{\Phi+\phi_{t}}(f\cdot))\]
 One then uses the geodesic equation \ref{eq:geod eq} to replace
$\partial_{\bar{t}}\partial_{t}\phi_{t}$ with $|\bar{\partial}_{X}(\partial_{t}\phi_{t})|_{\omega_{\phi_{t}}}^{2}$
in the first term in \ref{eq:sec deriv of log moment general} and
finally apply the Hörmander-Kodaira $L^{2}-$estimate for the inhomogenous
$\bar{\partial}_{X}$- equation (see \ref{eq:h-kodaria sharp case}
below) to deduce that $\partial_{t}^{2}\log\E(e^{-\phi_{t}})\leq0.$

\subsubsection{The proof of Proposition \ref{pro:moser for omega psh}}

Now consider the following functional on $\mathcal{C}^{0}(X)$, which
is invariant under addition of constants: \[
\mathcal{F}_{\omega}(\phi):=\mathcal{E}_{\omega_{0}}(\phi)+\frac{1}{N}\log\E(e^{-\phi})\]
For any given $\phi\in C^{0}(X)\cap\mbox{Psh}(X,\omega)$ we let $\phi_{t}$
be the $C^{0}-$geodesic such that $\phi_{0}=0$ and $\phi_{1}=\phi.$
By the concavity of $\mathcal{F}_{\omega}(\phi_{t})$ (resulting from
$(i')$ combined with $(ii')$ above) and since $\mathcal{F}_{\omega}(\phi_{0})=0$
we have \[
\mathcal{F}_{\omega}(\phi)\leq d(\mathcal{F}_{\omega}(\phi_{t}))/dt_{t=0}=\int(V\E(\delta/N)/\omega-1)\frac{1}{V}\omega(-d\phi_{t}/dt)_{t=0}\]
 Next, note that, since the inequality in the theorem that we are
about to prove is invariant under $\phi\rightarrow\phi+C$ we may
as well assume that $\sup_{X}\phi=0.$ Since $\phi_{t}$ is convex
in $t$ we have $-d\phi_{t}/dt\leq\phi_{1}-\phi_{0}=\phi$ (we are
using \emph{right }derivatives, which always exist by convexity) and
hence \[
\mathcal{F}_{\omega}(\phi)\leq\sup_{X}(V\E(\delta/N)/\omega-1)\frac{1}{V}(\int\omega(-d\phi_{t}/dt)_{t=0}\]
Next, note that, combining $(ii)$ and $(ii')$ above gives \[
(\int\omega(-d\phi_{t}/dt)_{t=0}=d\mathcal{E}_{\omega}(\phi_{t})/dt_{t=0}=-\mathcal{E}_{\omega}(\phi)\]
 and hence \[
\mathcal{F}_{\omega}(\phi)\leq\sup_{X}(V\E(\delta/N)/\omega-1)(-\mathcal{E}_{\omega}(\phi))\]
Finally, replacing $\phi$ with $\phi-\sup_{X}\phi$ finishes the
proof of the proposition.

\subsubsection{The psh projection $P_{\omega}$}

To reduce the case of a general smooth function $\phi$ to an $\omega-$psh
one we will make use of the psh-projection $P_{\omega}$ mapping smooth
functions to $\omega-$psh ones: \begin{equation}
(P_{\omega}\phi)(x):=\sup\left\{ \psi(x):\,\psi\in PSH(X,\omega),\,\psi\leq\phi\,\,\textrm{on\ensuremath{\, X}}\right\} \label{eq:def of proj as reg env}\end{equation}
It is not hard to see that $P_{\omega}\phi$ is continuous when $\phi$
is and moreover that the following {}``orthogonality relation''
holds \cite{b-b} \begin{equation}
\int_{X}(\phi-P_{\omega}\phi)dd^{c}(P_{\omega}\phi)=0\label{eq:og relation}\end{equation}
(as a consequence of the maximum principle for the Laplacian). 
\begin{prop}
Let $(X,\omega)$ be a Riemann surface with a Kähler. Then \[
(i)\,\mathcal{E_{\omega}}(\phi)\leq\mathcal{E_{\omega}}(P_{\omega}\phi),\,\,\,(ii)\,\left\Vert d(P_{\omega}\phi)\right\Vert _{X}^{2}\leq\left\Vert d\phi\right\Vert _{X}^{2}\]
 for any $\phi\in C^{\infty}(X).$ \end{prop}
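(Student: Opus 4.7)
The plan is to establish (i) via a concavity argument for $\mathcal{E}_\omega$ along \emph{affine} paths, and then to derive (ii) from (i) using the Dirichlet representation \ref{eq:eomega as dirch}. Set $\psi := P_\omega\phi$; by standard pluripotential theory $\psi$ is continuous, $\omega$-psh, and satisfies $\psi \leq \phi$, with $\omega_\psi = 0$ on the open set $\{\psi < \phi\}$ (this is the envelope property underlying \ref{eq:og relation}: $\phi-\psi$ vanishes on the coincidence set while $\omega_\psi$ vanishes off it, hence $\int_X(\phi-\psi)\,\omega_\psi=0$).

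For (i), consider the affine interpolation $\phi_t := (1-t)\psi + t\phi$, $t\in[0,1]$, and extend $\mathcal{E}_\omega$ to arbitrary (not necessarily $\omega$-psh) $W^{1,2}$ functions by means of formula \ref{eq:eomega as dirch}. A direct computation using that formula yields
$$\frac{d^{2}}{dt^{2}}\mathcal{E}_\omega(\phi_t) \;=\; -\frac{1}{V}\int_X d(\phi-\psi)\wedge d^c(\phi-\psi) \;\leq\; 0,$$
so $t\mapsto\mathcal{E}_\omega(\phi_t)$ is concave on $[0,1]$. Since $\phi_0=\psi$, concavity combined with the first variation formula \ref{eq:first variational form}(ii) gives
$$\mathcal{E}_\omega(\phi) - \mathcal{E}_\omega(\psi) \;\leq\; \frac{d}{dt}\Big|_{t=0^+}\mathcal{E}_\omega(\phi_t) \;=\; \frac{1}{V}\int_X(\phi-\psi)\,\omega_\psi \;=\; 0,$$
which is (i).

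For (ii), apply the Dirichlet representation \ref{eq:eomega as dirch} to $\phi$ and to $\psi$ and subtract:
$$0 \;\geq\; V\bigl(\mathcal{E}_\omega(\phi) - \mathcal{E}_\omega(\psi)\bigr) \;=\; \int_X(\phi-\psi)\,\omega \;-\; \tfrac12\|d\phi\|_X^2 \;+\; \tfrac12\|d\psi\|_X^2.$$
Since $\phi-\psi\geq 0$ and $\omega>0$, the first term on the right is nonnegative, so $\|d\psi\|_X^2 \leq \|d\phi\|_X^2$, proving (ii).

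The main obstacle is the regularity of the envelope $\psi = P_\omega\phi$ needed to justify the integration by parts hidden in \ref{eq:eomega as dirch} and the identity $\omega_\psi = 0$ on $\{\psi<\phi\}$ in a sufficiently strong (at least distributional) sense. When $\omega$ is smooth and $\phi\in C^\infty(X)$, standard obstacle-problem regularity gives $\psi\in C^{1,1}(X)$, harmonic precisely on $\{\psi<\phi\}$; this is more than enough to make the first- and second-variation computations along $\phi_t$ rigorous and to validate the orthogonality relation, so the concavity argument and the subsequent Dirichlet comparison go through without further subtlety.
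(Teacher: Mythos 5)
Your proof is correct, but your route to (ii) is genuinely different from the paper's and arguably cleaner. The paper treats (ii) as a standalone calculation: expand $\|d(P_\omega\phi)\|_X^2=\int(-P_\omega\phi)\,dd^c(P_\omega\phi)$ via integration by parts (justified by $C^{1,1}$ regularity of the envelope), replace $P_\omega\phi$ by $\phi$ against the measure $\omega_{P_\omega\phi}$ using the orthogonality relation \ref{eq:og relation}, drop the non-positive term $\int(P_\omega\phi-\phi)\omega$, and close with Cauchy--Schwarz applied to the remaining cross term $\int d\phi\wedge d^c(P_\omega\phi)$; (i) itself is outsourced to \cite{berm5}. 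You instead push (i) into service: subtracting the Dirichlet representation \ref{eq:eomega as dirch} for $\phi$ and for $\psi=P_\omega\phi$ converts (i) together with $\phi\geq\psi$ and $\omega\geq0$ immediately into $\|d\psi\|_X^2\leq\|d\phi\|_X^2$, with no Cauchy--Schwarz and no separate integration-by-parts ledger. The trade-off is that your (ii) is contingent on (i), so you had to supply a proof of (i) as well; that concavity-along-affine-paths argument (second variation $-\tfrac1V\int d u\wedge d^c u\leq0$, first variation at $t=0$ killed by the envelope orthogonality) is the standard one and is in the spirit of what \cite{berm5} does. Both derivations rest on the same two envelope facts ($C^{1,1}$ regularity and $\int(\phi-P_\omega\phi)\,\omega_{P_\omega\phi}=0$), so the regularity concerns you flag are exactly the ones the paper also invokes.
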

\begin{proof}
$(i)$ was proved in \cite{berm5} and $(ii)$ is proved in a similar
way, as we will next see. Integrating by parts (which is allowed,
for example using that $P_{\omega}\phi$ is $\mathcal{C}^{1,1}-$
smooth \cite{berm5}) gives \[
\left\Vert d(P_{\omega}\phi)\right\Vert _{X}^{2}=\int(-P_{\omega}\phi)dd^{c}(P_{\omega}\phi)=\int(-P_{\omega}\phi)(dd^{c}P_{\omega}\phi+\omega)+\int(P_{\omega}\phi)\omega\]
Next, since $P_{\omega}\phi=\phi$ a.e. with respect to $(dd^{c}P_{\omega}\phi+\omega)$
(by formula \ref{eq:og relation}) this means that \[
\left\Vert d(P_{\omega}\phi)\right\Vert _{X}^{2}=\int(-\phi)(dd^{c}P_{\omega}\phi+\omega)+\int(P_{\omega}\phi)\omega=\int(-\phi)(dd^{c}P_{\omega}\phi)+\int(P_{\omega}\phi-\phi)\omega\]
But since $(P_{\omega}\phi-\phi)\leq0$ and $\omega\geq0$ the last
term above is non-positive and hence \[
\left\Vert d(P_{\omega}\phi)\right\Vert _{X}^{2}\leq\left\Vert d(P_{\omega}\phi)\right\Vert _{X}\left\Vert d\phi\right\Vert _{X},\]
also using the Cauchy-Schwartz inequality for the first term above.
Dividing out $\left\Vert d(P_{\omega}\phi)\right\Vert _{X}$ (which
is always non-zero if $\phi$ is) proves Step 2.
\end{proof}

\subsubsection{End of proof of Theorem \ref{thm:(determinatal-moser...)...For-the}}

We start with the proof of the inequality \ref{eq:thm det ineq}.
Consider the line bundle $kL$ with $\Phi$ the weight of a metric
on $L$ with curvature $\omega:=dd^{c}\Phi>0$ and decompose \[
kL=:L_{k}+K_{X},\,\,\, k\Phi=:\Phi_{k}+\Phi_{\omega}\]
 where $\Phi_{\omega}:=\log(\frac{\omega}{Vidz\wedge d\bar{z}})$
defines the weight of a metric on on $K_{X}.$ Then the Hilbert space
$H^{0}(kL)$ associated to the weighted measure $(\frac{\omega}{V},\omega)$
is naturally isomorphic to the Hilbert space $H^{0}(L_{k}+K_{X})$
associated to the weight $\Phi_{k}$ in the adjoint setting, just
using that, by definition, \[
e^{-k\Phi}\frac{\omega}{V}=e^{-\Phi_{k}}idz\wedge d\bar{z}\]
 We will write $\omega_{k}:=dd^{c}\Phi_{k}$ (and we let $N_{k}$
be the dimension of $H^{0}(kL)$ and $V_{k}$ the volume (degree)
of $L_{k}.$ Then \begin{equation}
\omega_{k}/V_{k}=\omega/V\label{eq:normaliz omega}\end{equation}
and in particular $\omega_{k}>0.$ This follows immediately from the
fact that the forms in rhs and the lhs above both integrate to one
over $X$ and moreover, by assumption, $\omega$ satisfies the Kähler-Einstein
equation: \[
dd^{c}\phi_{\omega}(:=\mbox{-Ric\ensuremath{\omega}) }=\lambda\omega\]
 for some constant $\lambda,$ so that $\omega_{k}$ is proportinal
to $\omega.$ 

\emph{Step one:} scaling by $k$ and assuming $(\omega_{k})_{\phi}(:=\omega_{k}+dd^{c}\phi)\geq0.$

Applying Prop \ref{pro:moser for omega psh} and Prop \ref{pro:bergm as}
to $(L_{k},\omega_{k})$ and $\phi$ and using formula \ref{eq:eomega as dirch}
gives, using \ref{eq:normaliz omega}, \[
\frac{1}{N_{k}}\log\E(e^{-\phi})+\mathcal{E}_{\omega_{k}}(\phi)\leq\epsilon_{k}\left(\frac{1}{2V_{k}}\left\Vert d\phi\right\Vert _{X}^{2}+\int(\sup_{X}\phi-\phi)\frac{\omega}{V}\right)\]
 Next, we recall the following basic inequality: there is a constant
$C$ (only depending on $\omega)$ such that \[
\sup_{X}\psi\leq\int_{X}\psi\omega+C\]
 for any $\psi$ such that $\omega_{\psi}\geq0$ (as follows immediately
from Green's formula; see \cite{g-z} for more general inequalities).
Setting $\psi=\phi/k$ and applying the previous inequality to the
rhs in the preceeding inequality gives, since $\omega_{k}/k\sim\omega,$
that \[
\frac{1}{N_{k}}\log\E(e^{-\phi})+\mathcal{E}_{\omega_{k}}(\phi)\leq\epsilon_{k}(\left\Vert d\phi\right\Vert _{X}^{2}+kC)\]

\emph{Step two:} using $P_{\omega_{k}}$

Let now $\phi$ be a general smooth function. Since $P_{(\omega_{k})}\phi\leq\phi$
we have $\frac{1}{N}\log\E(e^{-\phi})\leq\frac{1}{N}\log\E(e^{-P_{(\omega_{k})}\phi})$
and hence the previous step applied to $P_{\omega_{k}}\phi$ combined
with $(i)$ in the previous proposition and step one gives \[
\frac{1}{N_{k}}\log\E(e^{-\phi})+\mathcal{E}_{\omega_{k}}(\phi)-\epsilon_{k}\leq\epsilon_{k}\left\Vert d(P_{\omega}\phi\right\Vert _{X}^{2}\leq\epsilon_{k}\left\Vert d\phi\right\Vert _{X}^{2}\]
 also using $(ii)$ in the previous proposition in the last inequality.
Finally,using the scaling property \begin{equation}
\log\E(e^{-(\psi+c)})/N=-c+\log\E(e^{-\psi})/N\label{eq:scal prop of log moment}\end{equation}
together with formula \ref{eq:eomega as dirch} and the identity \ref{eq:normaliz omega}
we can rewrite \[
\frac{1}{N_{k}}\log\E(e^{-\phi})+\mathcal{E}_{\omega_{k}}(\phi)=\frac{1}{N_{k}}\log\E(e^{-(\phi-\int_{X}\phi\frac{\omega}{V}})-\frac{1}{V_{k}}\frac{1}{2}\left\Vert d\phi\right\Vert _{X}^{2}\]
All in all this means that \[
\log\E(e^{-\phi})\leq(\frac{N_{k}}{V_{k}}\frac{1}{2}+\epsilon_{k})\left\Vert d\phi\right\Vert _{X}^{2}+\epsilon_{k}\]
Finally, by the Riemann-Roch theorem 

\[
\frac{N_{k}}{V_{k}}=\frac{k\deg(L)-\deg(K_{X})/2}{k\deg(L)-\deg(K_{X})}=\frac{N_{k}}{N_{k}-\deg(K_{X})/2}=\frac{N_{k}}{N_{k}+(1-g)}\]
finishing the proof of the inequality \ref{eq:thm det ineq}. 

To prove the second inequality \ref{eq:det moser intro fluct version}
in the theorem we first note that \[
\int\phi(\omega/V-\E(\delta/N))\leq\epsilon_{N}\left\Vert \phi\right\Vert _{L^{1}(X)/\R}(:=\epsilon_{N}\inf_{c\in\R}\left\Vert \phi+c\right\Vert _{L^{1}(X)})\]
Indeed, the lhs above is invariant under the action of $\R,$ $\phi\rightarrow\phi+c,$
and hence the inequality follows immediately from Prop \ref{pro:bergm as}.
The inequality \ref{eq:thm det ineq} then follows immediately from
the fact that $\phi\rightarrow d\phi$ is invariant under the action
of $\R$ combined with the scaling property \ref{eq:scal prop of log moment}
(just take $\psi=\phi-\int\phi\omega$ and $c=\int\phi(\omega/V-\E(\delta/N))).$

\subsection{Proof of Cor \ref{cor:sharp tail intro}}

The proof is a standard application of Markov's inequality: for any
given $t>0$ we have \[
\mbox{Prob}{}\{Y>1\}=\mbox{Prob}{}\{e^{tY}>e^{t}\}\leq e^{-t}\E(e^{tY}),\]
 where in our case $Y=\frac{1}{N\epsilon}(\phi(x_{1})+...+\phi(x_{N}).$
By the previous theorem the rhs above is bounded by $e^{-t+ct^{2}/2}e^{\epsilon_{N}}$
for $c=(a_{N}+\epsilon_{N})\left\Vert d(\frac{1}{N\epsilon}\phi)\right\Vert ^{2}.$
Taking $t=1/c$ shows that the first factor may be estimated by $e^{-\frac{1}{2c}}$
which finishes the proof of the corollary.

\subsection{Proof of Theorem \ref{thm: szeg=0000F6type} (Sharp Szegö type limit
theorem)}

We will use the following notation for the fluctuation of the linear
statistic determined by a function $\phi$ on $X:$ 

\[
\tilde{\phi}:=\sum_{i=1}^{N}(\phi(x_{i})-\E(\phi(x_{i}))\]
We start by proving the following universal bound on the variance
for the canonical processes, which is of independent interest.
\begin{prop}
For any given function $\phi$ on $X$ the following upper bound on
the variance of the corresponding linear statistic holds: \[
\E(|\tilde{\phi}|^{2})/4\leq(1+\epsilon_{N})\left\Vert d\phi\right\Vert _{X}^{2}+\epsilon_{N}\left\Vert \phi\right\Vert _{L^{1}(X)/\R}^{2}\]
where $\epsilon_{N}$ denotes a sequence, independent of $\phi,$
tending to zero. In particular, if $\phi\in L^{1}(X)$ and $d\phi\in L^{2}(X)$
then the variance is uniformly bounded from above by a constant independent
of $N.$\end{prop}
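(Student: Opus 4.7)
The variance bound will follow by adapting the Markov optimization used in the proof of Corollary~\ref{cor:sharp tail intro} to the fluctuation-form Moser--Trudinger inequality \eqref{eq:det moser intro fluct version}, and then integrating the resulting Gaussian tail.

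Since variance is translation invariant and $\E(\tilde{\phi})=0$, one has $\E(|\tilde{\phi}|^{2}) = \mathrm{Var}(Y) \leq \E(Y^{2})$ for $Y := \sum_{i=1}^{N}\phi(x_{i}) - N\int_{X}\phi\,\omega/V$, which differs from $\tilde{\phi}$ by the deterministic constant $\Delta := N(\E(\phi(x_{1})) - \int_{X}\phi\,\omega/V)$; so it suffices to estimate $\E(Y^{2})$ from above. Applying \eqref{eq:det moser intro fluct version} with $\phi$ replaced by $\pm s\phi$ gives, for each $s\in\R$,
\[
\log\E(e^{-sY}) \,\leq\, (1+\epsilon_{N})\tfrac{s^{2}}{2}\|d\phi\|_{X}^{2} + \epsilon_{N}|s|\|\phi\|_{L^{1}(X)/\R} + \epsilon_{N},
\]
where the fact that $Y$ and $\tilde{\phi}$ differ only by a constant contributes a harmless additional linear term absorbed into the $L^{1}/\R$ factor using Proposition~\ref{pro:bergm as}. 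Markov's inequality combined with optimization in $s>0$ then yields a sub-Gaussian tail shifted by $\epsilon_{N}\|\phi\|_{L^{1}(X)/\R}$:
\[
\mbox{Prob}\{|Y|>u\} \,\leq\, 2\exp\!\Bigl(-\frac{(u - \epsilon_{N}\|\phi\|_{L^{1}(X)/\R})^{2}}{2(1+\epsilon_{N})\|d\phi\|_{X}^{2}} + \epsilon_{N}\Bigr), \qquad u>0.
\]

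Plugging this into the layer-cake identity $\E(Y^{2}) = \int_{0}^{\infty}2u\,\mbox{Prob}\{|Y|>u\}\,du$ and evaluating the resulting shifted Gaussian integral yields $\E(Y^{2}) \leq 4(1+\epsilon_{N})\|d\phi\|_{X}^{2} + C\epsilon_{N}\|\phi\|_{L^{1}(X)/\R}\|d\phi\|_{X} + \epsilon_{N}$. The cross term is then handled by AM--GM, $C\epsilon_{N}\|\phi\|_{L^{1}(X)/\R}\|d\phi\|_{X} \leq \epsilon_{N}\|d\phi\|_{X}^{2} + \epsilon_{N}'\|\phi\|_{L^{1}(X)/\R}^{2}$, producing (after renaming $\epsilon_{N}$) a bound of the shape $4(1+\epsilon_{N})\|d\phi\|_{X}^{2} + \epsilon_{N}\|\phi\|_{L^{1}(X)/\R}^{2}$. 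Dividing by four delivers the proposition.

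There is essentially no substantive obstacle; the argument is a quantitative refinement of the proof of Corollary~\ref{cor:sharp tail intro}. The only bookkeeping subtlety worth noting is the conversion of the linear-in-$|s|$ term $\epsilon_{N}|s|\|\phi\|_{L^{1}(X)/\R}$ of \eqref{eq:det moser intro fluct version} into the squared $L^{1}/\R$-norm appearing in the final statement: the $|s|$ term shifts the Markov optimum, the shift in turn appears in the Gaussian tail, and the cross term emerging from tail integration must be bounded by AM--GM. Were one content to appeal directly to Corollary~\ref{cor:sharp tail intro}, one would obtain the even stronger estimate $\E(|\tilde{\phi}|^{2})/4 \leq (1+\epsilon_{N})\|d\phi\|_{X}^{2}$ without the $\|\phi\|_{L^{1}(X)/\R}^{2}$ term, but at the price of obscuring its natural origin.
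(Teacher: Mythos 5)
Your proposal is correct and follows essentially the same route as the paper: apply the fluctuation form \eqref{eq:det moser intro fluct version} of the Moser--Trudinger inequality to $s\phi$, pass through Markov's inequality to a sub-Gaussian tail, and integrate via the layer-cake formula. The one cosmetic difference is \emph{where} the linear term $\epsilon_N |s|\,\|\phi\|_{L^1(X)/\R}$ is converted into its squared companion: the paper applies $2ab\le a^2+b^2$ in the $s$-variable \emph{before} Markov, so that the moment-generating-function bound is already purely quadratic, $\log\E(e^{-s\tilde\phi})\le \tfrac12 s^2 f_N+\epsilon_N$ with $f_N=(1+\epsilon_N)\|d\phi\|_X^2+\epsilon_N\|\phi\|_{L^1/\R}^2+\epsilon_N$, whereas you carry the linear term through Markov and clean up the resulting cross term by AM--GM after the tail integration. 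Both bookkeeping orders yield the proposition. The detour through $Y$ and $\mathrm{Var}(Y)\le\E(Y^2)$ is harmless but superfluous: working directly with $\tilde\phi$ (as the paper does) avoids re-deriving the $\int\phi\,\omega/V\leftrightarrow\E(\phi(x_1))$ comparison that is already built into \eqref{eq:det moser intro fluct version}.
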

\begin{proof}
We will denote by $\epsilon_{N}$ a sequence tending to zero, which
may change from line to line. By the second inequality in Theorem
\ref{thm:(determinatal-moser...)...For-the} we have \[
\E(e^{-t\tilde{\phi}})\leq e^{(1+\epsilon_{N}))\frac{1}{2}t^{2}\left\Vert d\phi\right\Vert _{X}^{2}+\epsilon_{N}t\left\Vert \phi\right\Vert _{L^{1}(X)/\R}}e^{\epsilon_{N}}\]
Using $2ab\leq a^{2}+b^{2}$ hence gives \[
\E(e^{-t\tilde{\phi}})\leq e^{\frac{1}{2}t^{2}f_{N}}e^{\epsilon_{N}},\,\, f_{N}=\left((1+\epsilon_{N})\left\Vert d\phi\right\Vert _{X}^{2}+\epsilon_{N}\left\Vert \phi\right\Vert _{L^{1}(X)/\R}^{2}+\epsilon_{N}\right)\]
Repeating the argument in the proof of \ref{cor:sharp tail intro}
(involving Markov's inequality) hence gives \[
\mbox{Prob}{}\{(\tilde{\phi}>\lambda\}\leq e^{-\lambda^{2}\frac{1}{2}\frac{1}{f_{N}}}e^{\epsilon_{N}}\]
Now using the push-forward formula for the integral in $\E(|\tilde{\phi}|^{2})$
we can write \[
\E(|\tilde{\phi}|^{2})=\int_{0}^{\infty}\mbox{(Prob}{}\{\tilde{\phi}^{2}>\lambda\})d(\lambda^{2})+\int_{0}^{\infty}\mbox{(Prob}{}\{(-\tilde{\phi})^{2}>\lambda\})d(\lambda^{2})\]
\[
\leq2\cdot2f_{N}e^{\epsilon_{N}}\]
where we used that $\int_{0}^{\infty}e^{-\frac{1}{2}\frac{1}{a}s}ds=2a$
in the last step, finishing the proof.
\end{proof}
As shown in \cite{berm2} (see also the Remark below) we have for
any fixed\emph{ smooth} function $\phi$ and $t\in\R$ \begin{equation}
\E(e^{it\tilde{\phi}})\rightarrow e^{-t^{2}\frac{1}{2}\int_{X}d\phi\wedge d^{c}\phi}\label{eq:pf of clt}\end{equation}
as $N\rightarrow\infty.$ Using the variance estimate above we can
extend the previous convergence to the case when we merely assume
that $\left\Vert d\phi\right\Vert _{X}<\infty$ (and hence $\left\Vert \phi\right\Vert _{L^{1}(X)}<\infty).$
To this end take a sequence $\phi_{j}\in\mathcal{C}^{\infty}(X)$
such that $\left\Vert d(\phi_{j}-\phi)\right\Vert _{X}\rightarrow0$
and $\left\Vert \phi_{j}-\phi\right\Vert _{L^{1}(X)}\rightarrow0.$
Since \[
|\E(e^{it\tilde{\phi}_{j}})-\E(e^{it\tilde{\phi}})|^{2}\leq\E(|\tilde{u}|^{2})\]
 for $u=\phi_{j}-\phi$ (just using $1-e^{is}\leq|s|)$ we deduce
that \[
|\E(e^{it\tilde{\phi}_{j}})-\E(e^{it\tilde{\phi}})|^{2}\leq C(\left\Vert d(\phi-\phi_{j})\right\Vert _{X}^{2}+\left\Vert \phi-\phi_{j}\right\Vert _{L^{1}(X)}^{2}\]
 for $N>>1$ and hence letting first $N$ and then $j$ tend to infinity
proves the convergence \ref{eq:pf of clt} in the non-smooth case
as well.

Next, we observe that the convergence \ref{eq:pf of clt} moreover
holds for any $t\in\C.$ Indeed, \[
f_{k}(t):=\E(e^{it\tilde{\phi}})\]
is a sequence of holomorphic functions on $\C$ such that for $t$
in a fixed compact subset $K$ of $\C$ \[
|f_{k}(t)|\leq\E(e^{-(Im(t))\tilde{\phi}})\leq C_{K}\]
using the second inequality in Theorem \ref{thm:(determinatal-moser...)...For-the}.
Since $f_{k}$ converges point-wise to the holomorphic function $f(t)=e^{-t^{2}\int d\phi\wedge d^{c}\phi}$
for $t\in\R$ it hence follows (e.g. by Vitali's theorem) that $f_{k}$
converges to $f$ everywhere on $\C.$ In other words we have now
proved Theorem \ref{thm: szeg=0000F6type} for the case of real and
imaginary $\phi.$ Finally, if $\phi$ is complex valued we consider
$\phi_{s}=u+sv$ where $\phi=\phi_{s}$ for $s=i.$ The previous convergence
shows that $\E(e^{-\tilde{\phi_{s}}})$ converges to an explicit holomorphic
function (as above) for $s\in\R.$ Moreover, since the upper bound
on $|f_{k}(s)|$ still holds (by the same argument) the previous argument
also shows that the convergence holds for any $s\in\C$ and in particular
for $s=i.$
\begin{rem}
\label{rem:lower bd}For completenes we briefly indicate a {}``self-contained''
proof of \ref{eq:pf of clt} in the case when $\phi$ is smooth. Since
we already have established the upper bound on log $\E(e^{t\tilde{\phi}})$
it will be enough to establish the lower bound (and as above we may
assume that $t$ is real). To this end we use that, at level $k,$
\[
\partial_{t}^{2}\log\E(e^{-t\tilde{\phi}})=\frac{1}{2}\int_{X\times X}|K_{k\Phi+t\phi}(x,y)|^{2}e^{-((k\Phi+t\phi)(x)+(k\Phi+t\phi)(y)}(\phi(x)-\phi(y))^{2}\]
(which follows from \ref{eq:sec deriv of log moment general} using
that the first term vanishes and by rewriting the second term). Next
we restrict the integration to $A_{k}:=\{d(x,y)\leq\log k/k^{1/2}\}\subset X\times X.$
Let $z$ denote local holomorphic coordinates centered at $x\in X$
and a trivialization of $L$ such that $\Phi(z)=|z|^{2}+O(|z|^{3}).$
Then it is well-known that \[
K_{k\Phi+t\phi}(x+z/k^{1/2},x+w/k^{1/2})=ke^{z\bar{w}}+o(1)\]
 uniformly in $k$ and $t$ (for $t=0$ this follows immediately from
Theorem \ref{thm:bergman kernel as} below and the general case is
obtained from the same proof since the perturbation $t\phi$ does
not effect the leading term). Finally integrating first over $y$
(or rather $w)$ and then over $y$ gives the lower bound $\left\Vert d\phi\right\Vert ^{2}.$
Since, the first derivative of $\log\E(e^{-t\tilde{\phi}})$ at $t=0$
vanishes, using $(i)$ in \ref{eq:first variational form}, this finishes
the proof of the lower bound (by general integration theory). 
\end{rem}

\subsection{\label{sub:A-brief-acount}A brief acount of the higher dimensionsional
case}

Let us now come back to the case when $X$ is $n-$dimensional and
fix a Kähler form $\omega$ on $X.$ In \cite{berm2} the analogue
of the convergence in Theorem \ref{thm: szeg=0000F6type} was shown
to hold as long as $\phi$ is smooth. More precisely, in the convergence
statement $\phi$ has to be replaced by $k^{-(n-1)/2}\phi$ and the
norm $\left\Vert d\phi\right\Vert _{X}^{2}$ by \[
\left\Vert d\phi\right\Vert _{(X,\omega)}^{2}=\int d\phi\wedge d^{c}\phi\wedge\frac{\omega^{n-1}}{(n-1)!}(=\int|\nabla\phi|^{2}dV)).\]
 However, when $n>1$ there are integrable functions $\phi$ with
$\int_{X}|\nabla\phi|^{2}\omega^{n}<\infty,$ but $\int e^{-\phi}dV=\infty$
(as is well-known in the context of Sobolev inequalities). As a consequence,
it is not hard to check that for such a function $\phi$ we have $\E(e^{-(\phi(x_{1})+\cdots)})=\infty$
and in particular the analogue of the convergence in Theorem \ref{thm: szeg=0000F6type}
cannot hold (after perhaps scaling $\phi)$. Moreover, the corresponding
analogue of the Moser-Trudinger inequality in Theorem \ref{thm:(determinatal-moser...)...For-the}
fails when $n>1$ (as is seen by approximating $\phi$ as above with
a monotone smooth sequence $\phi_{j}$). Explicit counter-examples
are obtained, already when $N=1,$ by letting $X=\P^{n}(\supseteq\C_{z}^{n})$
and $\omega$ be the standard $SU(n+1)-$invariant metric on $\P^{n}$
and taking $\phi_{j}(z):=m\log(\frac{1/j+|z|^{2}}{1+|z|^{2}}$) (for
a fixed $m\geq n)$ decreasing to $\phi(z).$ Note that $\phi_{j}$
is even $\omega-$psh. 

On the other hand, another variant of the determinantal Moser-Trudinger
inequality in Theorem \ref{thm:(determinatal-moser...)...For-the}
does hold in higher dimensions. More precisely, $\frac{1}{2}\left\Vert d\phi\right\Vert _{X}^{2}$
has to be replaced by Aubin's $J-$functional (which is comparable
to $\int d\phi\wedge d^{c}\phi\wedge(\omega_{\phi})^{n}.$ Moreover
$\phi$ has to be assumed $\omega-$psh (i.e. $\omega_{\phi}\geq0)$
(otherwise there are counter-examples, as explained in \cite{berm5})
When $X=\P^{n}$ (or more generally $X$ is a rational homogenous
manifold) the corresponding inequality is the content of Cor 2 in
\cite{berm5}, with vanishing error terms $\epsilon_{N}.$ More generally,
the arguments in Step one in the proof of Theorem \ref{thm:(determinatal-moser...)...For-the}
extend in a straight-forward manner to the higher-dimensional case
when the Kähler metric $\omega$ has a constant scalar curvature (but
then the error terms $\epsilon_{N_{k}}$ are then of the order $O(1/k)).$

\section{\label{sec:Bergman-kernel-asymptotics}Bergman kernel asymptotics
with expontially small error terms }

In this section we will prove the following theorem used in the proof
of Proposition \ref{pro:bergm as} above (for an explicit description
of $\delta$ below, see Remark \ref{rem:explicit delta}).
\begin{thm}
\label{thm:bergman kernel as}Let $L\rightarrow X$ be a line bundle
over a Riemann surface equipped with a metric $e^{-\Phi}$ with positive
curvature form $\omega(=dd^{c}\Phi)$ such that the Riemannian metric
on $X$ defined by $\omega$ has constant scalar curvature $R$ close
to $x.$ Then there is a neighbourhood of $\{x\}\times\{x\}$ in $X\times X$
such that the corresponding Bergman kernel $K_{k}$ satisfies \begin{equation}
K_{k}(z,w)=(k+\frac{1}{2}R)e^{k\psi(\bar{z},w)}+\epsilon_{k}\label{eq:bergman kernel off-diag in theorem exp}\end{equation}
 where $\psi$ is the local holomorphic function such that $\psi(\bar{z},w)=\Phi(z)$
and $\epsilon_{k}$ denotes a smooth section of $kL\boxtimes kL$
whose point-wise norm is of the order $\mathcal{O}(e^{-\delta k}).$
In particular,

\begin{equation}
B_{k}(x):=\left\Vert K_{k}(x,x)\right\Vert =k+\frac{1}{2}R+\mathcal{O}(e^{-\delta k})\label{eq:bergman function in theorem exp small error}\end{equation}

\end{thm}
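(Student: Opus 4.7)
The strategy is to glue the \emph{exact} Bergman kernel of the simply-connected model of constant curvature to $X$ via a $\bar\partial$-correction, exploiting the Gaussian-type decay of the error produced by the gluing cutoff. Exponentially small errors are available in the first place because of the following structural observation: on a constant-curvature Riemann surface all higher coefficients in the Tian--Yau--Zelditch expansion of $K_k$ involve covariant derivatives of the scalar curvature and therefore vanish, so the model kernel $(k+R/2)e^{k\psi(\bar z,w)}$ of the universal cover already captures the full polynomial asymptotics of $K_k$; the only remaining error is tunnelling between different sheets of the cover, which is exponentially small in $k$.

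\emph{Step one (model kernel and gluing).} Fix $x\in X$ and choose, using the constant curvature hypothesis, a holomorphic chart and trivialization on a geodesic ball $B(x,r)$ with $r<I(X)$, so that $(B(x,r),\omega)$ embeds isometrically into the universal cover $\widetilde{X}\in\{\P^{1},\C,\H\}$ (scaled so its curvature equals $R$). Let $\psi(\bar z,w)$ be the unique local holomorphic polarization of $\Phi$, $\psi(\bar z,z)=\Phi(z)$, and let $\chi\in C_c^\infty(B(x,r))$ equal $1$ on $B(x,r/2)$. Define the quasi-reproducing section of $kL$
\[
S_k(w):=(k+\tfrac{R}{2})\,\chi(w)\,e^{k\psi(\bar x,w)}.
\]
On $\widetilde X$ the section $(k+R/2)\,e^{k\psi(\bar z,w)}$ is \emph{exactly} the Bergman kernel for the model $L^2$-product (Fubini--Study for $g=0$, Bargmann--Fock for $g=1$, Petersson for $g\ge 2$). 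Since $\bar\partial S_k$ is supported in the annulus $A_r:=\{r/2<d(x,\cdot)<r\}$, and $|e^{k\psi(\bar x,w)}|^2 e^{-k\Phi(w)} = e^{-k D(x,w)}$ for the Calabi diastasis $D$ (bounded below by an explicit positive model function of $d(x,\cdot)$), one obtains $\|\bar\partial S_k\|_{L^2(e^{-k\Phi})}^2\le C e^{-\delta'k}$, with $\delta'$ determined by the value of $D$ at distance $I(X)$ -- this is the origin of formula \eqref{eq:form for delta in prop}.

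\emph{Step two ($L^2$-correction and identification).} Since $k\omega$ is strictly positive, H\"ormander's $L^2$-estimate for $\bar\partial_{kL}$ produces a minimal $L^2$-solution $u_k$ to $\bar\partial u_k=\bar\partial S_k$ with $\|u_k\|_{L^2(e^{-k\Phi})}\le Ck^{-1/2}\|\bar\partial S_k\|_{L^2(e^{-k\Phi})}$. Then $T_k:=S_k-u_k$ is a holomorphic section of $kL$, and by minimality $u_k$ is orthogonal to $H^0(X,kL)$, so $T_k=\Pi_k S_k$. Testing the reproducing identity $\Psi(x)=\langle\Psi,K_k(x,\cdot)\rangle$ against $T_k$ shows that $K_k(x,\cdot)-T_k$ is an $L^2$-exponentially small holomorphic section. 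The sub-mean-value inequality for $|\cdot|^2 e^{-k\Phi}$ on balls of radius of order $k^{-1/2}$ then converts this into the pointwise estimate \eqref{eq:bergman kernel off-diag in theorem exp} at the cost of only a polynomial factor in $k$; the diagonal statement \eqref{eq:bergman function in theorem exp small error} is immediate from $\psi(\bar x,x)=\Phi(x)$.

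The main technical difficulty is to push the estimates through while preserving the exponential rate. Polynomial-in-$k$ losses from H\"ormander and from the $L^2\!\to\! L^\infty$ passage are harmless, but the Gaussian decay of the model kernel must be tracked carefully enough to land on the explicit constant in \eqref{eq:form for delta in prop}: this forces $r$ to be taken arbitrarily close to $I(X)$ from below, so that the diastasis $D$ governing the decay reaches its maximum admissible value -- essentially $\log\cosh^2$ of hyperbolic distance $I(X)$ in the $g\ge 2$ case, with analogous expressions in the other two geometries. The case $g=0$ is special: $X$ itself coincides with the model, no gluing is needed, and $\epsilon_k=0$ identically.
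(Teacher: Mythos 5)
Your proposal follows essentially the same route as the paper: exploit the constant-curvature model to write down an explicit local reproducing kernel, glue it into $X$ by a cutoff, correct by solving $\bar\partial$ with H\"ormander--Kodaira (which is the precise place where the global positivity of $L$ is used), and downgrade the resulting $L^2$ bound to pointwise by sub-mean-value. This is exactly the scheme the paper runs through in its two propositions (local Bergman kernel for the model, then the globalization lemma), and the identification of the optimal exponent with the diastasis / injectivity-radius quantity is the same.

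There is, however, one step that is asserted rather than proved, and it is the step the paper spends most of its effort on. You deduce that $K_k(x,\cdot)-\Pi_k S_k$ is $L^2$-exponentially small from ``testing the reproducing identity against $T_k$.'' What that requires is the approximate \emph{local} reproducing property
\[
\left\langle \Psi,\,(k+\tfrac{R}{2})\,\chi\, e^{k\psi(\bar x,\cdot)}\right\rangle_{X,k\Phi}=\Psi(x)+\mathcal O(e^{-\delta k})\,\|\Psi\|
\qquad\text{for all }\Psi\in H^0(X,kL),
\]
and you justify this by saying the corresponding kernel is \emph{exact} on the universal cover $\widetilde X$. But an element $\Psi\in H^0(X,kL)$ pulls back to a $\Gamma$-automorphic form on $\widetilde X$, which is \emph{not} in the model $L^2$-space there (its $\widetilde X$-norm is infinite when $g\geq 1$), so the exact reproducing identity on $\widetilde X$ does not apply to it; and truncating an exact identity for the full model space to the domain $U$ is not the same as proving reproduction for a function defined only on $U$. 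The ``tunnelling between sheets'' picture is a heuristic, not an argument. The paper closes precisely this gap by proving the local reproducing property directly: rotational symmetry of $\Phi_0$ gives exact reproduction at $w=0$ via the mean value property, and the transitive action of the model's isometry group (the matrices $M_z$ and the relations \ref{eq:first relation} and \ref{eq:second relation}) transports it to an arbitrary nearby center $z$, with the exponential tail coming from the explicit integral \ref{eq:exp decay of integral}. Inserting that lemma into your Step two makes the argument complete; with it, the rest of your proof matches the paper's, and your identification of the optimal $\delta$ via the diastasis at radius $I(X)$ agrees with Remark \ref{rem:explicit delta}.
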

(in the case when $X$ is the two-sphere and $R$ is constant on all
of $X$ our arguments will give the well-known fact that the error
terms vanish identically). Here the Berman kernel $K_{k}\in H^{0}(X\times\bar{X,}kL\boxtimes kL)$
denotes the integral kernel of the orthogonal projection from $\mathcal{C}^{\infty}(X,L)$
onto the Hibert space $H^{0}(X,kL)$ using the $L^{2}-$norm defined
by the metric on $L$ and volume form $d\nu=\omega$ (formula \ref{eq:general l2 norm intro}).
The normalization of $R$ has been chosen so that $R=\deg(TX)(=2g-2)$
when it is globally constant (and hence integrating \ref{eq:bergman function in theorem exp small error}
against $\omega$ over $X$ gives the Riemann-Roch relation \ref{eq:r-r relation intro}
for $k$ large). We recall that a sequence $a_{k}$ is said to be\emph{
exponentially small}, written as $a_{k}=\mathcal{O}(e^{-k\delta})$
if $|a_{k}|\leq Ce^{-k\delta}$ for some numbers $C,\delta>0$ (if
$a_{k}$ are functions then, by definition, the estimate holds uniformly).

The case of larger error terms of the form $\mathcal{O}(e^{-(\log k)^{2}\delta})$
in \ref{eq:bergman function in theorem exp small error} was priouvsly
obtained in \cite{liu,liu-l} using Tian's method of peak sections.
It was also pointed out there that the case of even larger error terms
of the form $\mathcal{O}(k^{-\infty})$ can be deduced from the results
in \cite{l-t} concerning the Yau-Tian-Zelditch-Catlin expansion of
$B_{k},$ but that one may expect exponentially small error terms
(as confirmed in the theorem above). In the case when $L=K_{X}$ and
the scalar curvature is constant on\emph{ all }of $X$ (in particular
$X$ then has genus at least two and $R<0$) the error term $\mathcal{O}(e^{-\delta k})$
in \ref{eq:bergman function in theorem exp small error} could also
be obtained by writing $X=\Gamma/\H$ for a Fuchsian group $\Gamma$
and using that that $B_{k}$ is constant in the non-compact case setting
of $X=\H$ and then estimate the effect of the {}``$\Gamma-$periodization''
coming from a Poincaré theta series (as pointed out to the author
by Steve Zelditch) A similar periodization argument was used in \cite{fau}
in the case when $X$ is the torus. 

One motivation to consider the situation when $R$ is not globally
constant is to allow applications to the setting of constant curvature
metrics with conical singularities and cusps. For example, in the
hyperbolic setting this means that the Kähler form $\omega$ is the
unique solution to \begin{equation}
\mbox{Ric \ensuremath{\omega=-\omega+\sum_{i}c_{i}\delta_{P_{i}}}}\label{eq:sing einstein eq}\end{equation}
for given coefficents $c_{i}\in[0,1]\cap\Q$ and a finite number of
points $P_{i}$ in $X$ (\cite{heins}, Thm 21.1). Equivalently, $\omega$
has constant scalar curvature $-1$ on $X-\{P_{i}\}$ with \emph{conical
singularities} at an angle $2\pi(1-c_{i})$ at any $P_{i}$ such that
$c_{i}<1$ and a\emph{ cusp} at any $P_{i}$ such that $c_{i}=1.$
\footnote{The classical case when $X-\{P_{i}\}=\Gamma/\H$ for a Fuchsian group
$\Gamma$ corresponds to the the case when $c_{i}=1-1/m$ for $m$
a positive integer or infinity and then $\omega$ is induced from
the hyperbolic metric on $\H$ %
}

Letting $D=\sum_{i}c_{i}\mathcal{O}_{P_{i}}$ be the corresponding
$\Q-$ line bundle we then have the following
\begin{cor}
Let $L=K_{X}+D$ and let $\omega$ be the unique (singular) metric
on $X$ above. Then the Bergman kernel expansions \ref{eq:bergman kernel off-diag in theorem exp}
and \ref{eq:bergman function in theorem exp small error} hold for
any $x\in X-\{P_{i}\}.$ Moreover, the positive number $\delta$ appearing
in \ref{eq:bergman function in theorem exp small error} may be taken
to be arbitrarily close to \[
2\log(\cosh(\pi I(x)/\sqrt{2}))\]
 where $I(x)$ is the injectivity radius in $X-\{P_{i}\}$ at $x$
(which coincides with half the length of the shortest closed and simple
geodesic on $X,$ passing through $x,$ when $D=0).$ 
\end{cor}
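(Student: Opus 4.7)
The plan is to extend the proof of Theorem \ref{thm:bergman kernel as} to the present singular setting by localizing at any point $x \in X \setminus \{P_i\}$ where the metric $\omega$ is smooth and has constant scalar curvature. First, I would fix $k$ large enough that $kL = kK_X + kD$ is an honest line bundle (i.e., $kc_i \in \Z$ for all $i$), so that $H^0(X,kL)$ consists of meromorphic sections of $kK_X$ with a pole of order at most $kc_i$ at each $P_i$. The singular weight $\Phi$ is smooth and strictly psh on $X \setminus \{P_i\}$ and has logarithmic poles at the $P_i$, so $dd^c \Phi$ is a positive $(1,1)$-current globally (with extra positive Dirac mass at the $P_i$).

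Second, I would argue that the construction in Theorem \ref{thm:bergman kernel as} is essentially local around $x$: one writes down the model kernel $\tilde K_k(z,w) = (k + R/2)e^{k\psi(\bar z,w)}$ on a small ball $B(x,r) \subset X \setminus \{P_i\}$, cuts it off outside a slightly larger ball, and corrects it to a true section of $kL \boxtimes \overline{kL}$ by solving a $\bar\partial$-equation via the H\"ormander--Kodaira $L^2$-estimate. The exponentially small error $\epsilon_k = \mathcal O(e^{-\delta k})$ comes from the Gaussian decay of $|\tilde K_k(z,w)|$ in $d(z,w)$ combined with a bound on $\|\bar\partial (\chi \tilde K_k)\|_{L^2}$, with $\delta$ governed by the injectivity radius around $x$. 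This local argument applies verbatim here because $\omega$ has constant scalar curvature on $X \setminus \{P_i\}$ and $r$ can be chosen smaller than the distance from $x$ to the nearest $P_i$.

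Third, the key point is that the global H\"ormander estimate still applies despite the singular weight. Since $dd^c \Phi$ is a positive current (and the weight $k\Phi$ for $k \geq 1$ has log-singularities corresponding to multiplier ideals that merely force sections to vanish to appropriate order at $P_i$, which the sections in $H^0(X,kL)$ already do by construction), the $\bar\partial$-equation for the correction can be solved with the required $L^2$ bound. Integrating $e^{-k\Phi}$ against the error against $\omega$ near each $P_i$ is convergent because of the explicit behavior $\Phi \sim c_i \log|z - P_i|^2$ with $c_i \leq 1$, and the contribution near the $P_i$ is even further suppressed by the fact that the cutoff $\chi \tilde K_k$ vanishes there. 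Consequently the $\mathcal O(e^{-\delta k})$ bound of the theorem is reproduced, with $\delta$ governed by the injectivity radius in $X \setminus \{P_i\}$ at $x$, which yields the stated value $2 \log \cosh(\pi I(x)/\sqrt{2})$.

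The main obstacle I expect is bookkeeping the extension of the H\"ormander $L^2$-machinery to the singular weight while still extracting the sharp injectivity-radius-dependent exponential rate, rather than a weaker $\mathcal O(k^{-\infty})$ bound. Concretely, I must check that the cutoff can be placed strictly inside the smooth locus so that $\bar\partial (\chi \tilde K_k)$ is supported away from all $P_i$, and that the weight $e^{-k\Phi}$ there is bounded uniformly in $k$ by a constant times $e^{-k\delta'}$ for $\delta'$ arbitrarily close to the claimed value. Once this is established, the identification of $I(x)$ with half the length of the shortest closed geodesic through $x$ (in the case $D = 0$) is a standard fact about negatively curved surfaces.
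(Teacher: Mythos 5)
Your plan is broadly in the spirit of the paper — localize around $x$ in the smooth locus, use the constant–curvature model kernel there, place the cutoff strictly inside the smooth locus so that $\bar\partial(\chi \tilde K_k)$ is supported away from every $P_i$, and then globalize by a weighted $\bar\partial$-estimate. But the heart of the argument, which you are asserting rather than proving, is precisely the step you flag as the ``main obstacle'': that the $L^2$-estimate for the correction actually holds with the singular weight. This is not automatic from ``$dd^c\Phi$ is a positive current,'' and your appeal to multiplier ideals is not what does the work here. The paper's proof hinges on a specific algebraic reformulation: using the singular K\"ahler--Einstein equation $dd^c\Phi = c\,e^{\Phi-\Phi_D}\,dz\wedge d\bar z$, with $\Phi_D = \log|s_D|^2$, to rewrite the $L^2$-density as
\[
e^{-k\Phi}\,dd^c\Phi = e^{-\Phi_k},\qquad \Phi_k := (k-1)\Phi + \Phi_D ,
\]
thereby realizing $\Phi_k$ as a (singular) weight with positive curvature current on $L_k := (k-1)L + D$ and passing to the adjoint decomposition $kL = L_k + K_X$. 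It is in this adjoint formulation that Demailly's singular Hörmander--Kodaira estimate \ref{eq:h-kodaria sharp case} applies cleanly and reproduces the same $\delta$ as in the smooth case, because $g$ is supported in the regular locus. Without this step your $L^2$-bound on the full singular manifold is not established; you would at best get a weighted estimate whose dependence on $k$ and on the singular behavior you have not controlled.

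Two further points are factually off or missing. First, your stated local behavior $\Phi \sim c_i\log|z-P_i|^2$ is wrong at both types of singular points: at a conical point ($c_i<1$) the weight $\Phi$ is actually continuous (the singularity sits in $dd^c\Phi$, not in $\Phi$), and at a cusp ($c_i=1$) the paper uses Heins's result that $\Phi \sim -\log(-\log|z|)$, a far milder singularity than a log pole. Second, you have not addressed the identification of the Hilbert space: the corollary concerns the Bergman kernel of the subspace of cusp forms, and the paper must (and does) check that this subspace coincides with the $L^2$-integrable sections with respect to $e^{-k\Phi}\,dd^c\Phi$; this is precisely where the $-\log(-\log|z|)$ asymptotics enter and yield the equivalence ``$u$ is locally square integrable iff $u(0)=0$.'' Without this, the kernel your $\bar\partial$-argument estimates need not be the one asserted in the statement.
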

The Bergman kernel in the previous corollary is, as usual, defined
wrt the subspace of $H^{0}(X,kL)$ consisting of all {}``cusp forms'',
i.e. sections vanishing at the cusps and it is well-defined for all
$k$ such that $kc_{i}\in\Z$ for all $i.$ 

The rest of the section is devoted to the proof of the theorem above;
following the scheme in \cite{berm0} we first prove a local variant
of the expansion and then globalize. The main point here is the observation
that the local expansion may be obtained using the {}``local symmetry''
of $L$ (as opposed to the general case treated in \cite{berm0})
which leads to a precise controle of the error terms.

As is well-known the local constant curvature condition implies that
there exists a local holomorphic coordinate $w$ centered at $x$
on some simply connected neighbourhood $U$ such that \[
\omega:=\frac{i}{2\pi}2(1+R|w|^{2})^{-2}dw\wedge d\bar{w}(=dd^{c}\Phi_{0})\]
 where $\Phi_{0}(w)=R^{-1}\log(1+R|w|^{2})$ for $R\neq0$ and $\Phi_{0}=|w|^{2}$
for $R=0$ (obtained in the limit $R\rightarrow0).$ Now fix a local
holomorphic section $s$ of $L$ close to $x$ and write $\left\Vert s\right\Vert ^{2}=e^{-\Phi}$
for a local function $\Phi,$ recalling that $\omega=dd^{c}\Phi$
(see section \ref{sub:Notation-and-general}). By the previous relation
this means that $\Phi-\Phi_{0}$ is a harmonic function on $U$ and
hence we may write $e^{-\Phi}=|h|^{2}e^{-\Phi_{0}}$ for some non-vanishing
holomorphic function $h$ on $U.$ Accordingly, after replacing $s$
with $h^{-1}s$ we may as well assume that we are in the\emph{ model
case} $\Phi=\Phi_{0}.$

\subsection{Local Bergman kernels for the model cases}

Given a smooth function $\Phi$ on a domain $U$ in $\C$ containing
$0$ we let \[
\left\langle f,g\right\rangle _{U,k\Phi}:=\int_{U}f\bar{g}e^{-k\Phi}dd^{c}\Phi\]
and denote by $H_{k\Phi}(U)$ the space of all holomorphic functions
on $U$ such $\left\Vert f\right\Vert _{U,k\Phi}^{2}(:=\left\langle f,g\right\rangle _{U,k\Phi})<\infty.$
Following \cite{berm0} we will say that \emph{$K_{(k)}(z,\zeta)$
is a (local) Bergman kernel mod $\mathcal{O}(e^{-k\delta})$ (with
respect to $\Phi)$ }if it is holomorphic in $\zeta$ and there exists
number $\delta>0$ such that for any $f\in H_{k\Phi}(U)$ we have,
for all $z$ in some neighbourhood $V\subset U$ of $0$ that \begin{equation}
f_{k}(z)=\left\langle f_{k},\chi K_{(k)}(z,\cdot)\right\rangle _{U,k\Phi}+\left\Vert f\right\Vert _{U,k\Phi}\mathcal{O}(e^{-k\delta})e^{k\Phi/2}\label{eq:def form for local bergm}\end{equation}
 where $\chi$ denotes a smooth function $\chi$ compactly supported
on $U$ which is equal to one on $\frac{1}{2}U.$ 
\begin{prop}
Let $\Phi(w)=-2\log(1+R|w|^{2})/R.$ Then the function $K_{(k)}(z,\zeta)=(k+\frac{R}{2})(1+R\zeta\bar{z})^{2k/R}$
is a local Bergman kernel mod $\mathcal{O}(e^{-k\delta})$ (wrt $\Phi)$
when $R\neq0$ and $K_{(k)}(z,\zeta)=ke^{\bar{z}\zeta}$ when $R=0$
(coinciding with the limit when $R\rightarrow0).$ \end{prop}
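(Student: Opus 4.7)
The approach follows the scheme of Berman's local Bergman kernel construction \cite{berm0}: verify the reproducing property of the candidate $K_{(k)}$ on a global model where $\Phi$ extends to a weight of a complete constant-curvature metric, then transfer to the local statement on $U$ by a cutoff argument controlled by H\"ormander's $L^2$-estimate. What is new here is the \emph{exponentially} small error; this rests on the observation that in each of the cases $R=0$, $R>0$, $R<0$ the model weight is globally symmetric and the proposed $K_{(k)}$ is an \emph{exact} reproducing kernel on the global model, so the entire error comes from the localization.

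Concretely, the three global models are $(\P^1, \mathcal{O}(2k/R))$ with the Fubini--Study weight when $R>0$, the Bargmann--Fock space on $\C$ with weight $|w|^2$ when $R=0$, and the weighted Bergman space on the disk $\{R|w|^2>-1\}$ when $R<0$. In each case I would orthonormalise the monomial basis $\{w^j\}$ by a direct Gaussian or beta-function integral and sum $\sum_j \|w^j\|^{-2}w^j(z)\overline{w^j(\zeta)}$ to confirm that $K_{(k)}(z,\zeta)$ is the exact global Bergman kernel; the prefactor $k+R/2$ then matches $\dim H^{0}(\P^1, kL_{\mathrm{mod}})$ when $R>0$ and the analogous integral normalization when $R<0$. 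Given $f\in H_{k\Phi}(U)$ and an evaluation point $z\in V\subset\frac{1}{2}U$, I fix a cutoff $\chi$ with $\chi\equiv 1$ on a neighbourhood of $z$ and $\mathrm{supp}(\chi)\subset U$, extend $\chi f$ by zero to $X_{\mathrm{mod}}$, and solve $\bar\partial v=\bar\partial(\chi f)$ on $X_{\mathrm{mod}}$ by a \emph{twisted} H\"ormander $L^2$-estimate with an auxiliary plurisubharmonic weight $k\rho$ satisfying $\rho(z)=0$ and $\rho\ge 2\delta$ on $\mathrm{supp}(\bar\partial\chi)$, yielding
\[
\|v\|^2_{k\Phi+k\rho}\le \frac{C}{k}\int |f\,\bar\partial\chi|^2 e^{-k\Phi-k\rho}\,dd^c\Phi\le C e^{-2k\delta}\|f\|^2_{U,k\Phi}.
\]

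The function $\chi f - v$ is then holomorphic on $X_{\mathrm{mod}}$ and lies in the global Bergman space, so the exact global reproducing property gives
\[
(\chi f)(z)-v(z)=\langle \chi f, K_{(k)}(z,\cdot)\rangle_{X_{\mathrm{mod}},k\Phi}-\langle v, K_{(k)}(z,\cdot)\rangle_{X_{\mathrm{mod}},k\Phi}.
\]
Since $\chi(z)=1$ and $\mathrm{supp}(\chi f)\subset U$, applying the sub-mean value inequality to $\chi f - v$ at $z$ to bound $|v(z)|^2 e^{-k\Phi(z)}$, together with a pointwise off-diagonal bound on $K_{(k)}$ and Cauchy--Schwarz for the correction term, shows that both error terms are $\mathcal{O}(e^{-k\delta})\|f\|_{U,k\Phi}e^{k\Phi(z)/2}$, which is precisely the required reproducing identity. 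The main obstacle is to calibrate the twist $\rho$ so that the resulting constant $\delta$ is sharp, ultimately reflecting the geometric (Agmon-type) distance from $z$ to $\mathrm{supp}(\bar\partial\chi)$ in the constant-curvature geometry; this sharpness is what will feed into the exponential rate of Proposition~\ref{pro:bergm as} once the local Bergman kernel models are glued in the remainder of Section~\ref{sec:Bergman-kernel-asymptotics}.
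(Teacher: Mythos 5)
Your route is genuinely different from the paper's and considerably heavier. Berman's proof of this proposition uses no $\bar\partial$-machinery at all: the reproducing identity at the origin is obtained by applying the mean value property of holomorphic functions radially (integrate $f(re^{i\theta})$ over $\theta$ for fixed $r$, then over $r$) together with the rotational invariance of $\Phi$, and the normalizing constant $k+\tfrac{R}{2}$ drops out of the elementary integral $\int_{0}^{C^{2}}(1+Rs)^{-2k/R-2}\,ds$; the exponentially small error is simply the tail $\int_{|w|>\epsilon}e^{-k\Phi}\,dd^{c}\Phi$. The identity is then moved from $0$ to a general $z$ by conjugating with the M\"obius map $F_{z}(w)=(z-w)/(1+R\bar{z}w)$, which is an isometry of the weighted structure via the relations \ref{eq:first relation}--\ref{eq:second relation}. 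All $\bar\partial$-estimates in the paper are deferred to the globalization step, where the cutoff multiplies the \emph{kernel} rather than the test function, so the source $\bar\partial(\chi K_{(k),z})$ is already exponentially small by the off-diagonal decay of $e^{2k\mathrm{Re}\,\psi(z,\bar w)-k\Phi(z)-k\Phi(w)}$ and no twist is needed. Your approach --- verify $K_{(k)}$ as the exact global Bergman kernel on the constant-curvature model, then localize by a twisted H\"ormander estimate --- is conceptually coherent and gives a clean a priori identification of the candidate kernel, but it does more than this purely local statement requires and, as a result, runs into technicalities that the symmetry argument simply avoids.

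There is a concrete gap in the $R>0$ case: your model is $\P^{1}$, which is compact, so there is no non-constant plurisubharmonic $\rho$ on it, and in particular none with $\rho(z)=0$ and $\rho\geq 2\delta$ on $\mathrm{supp}(\bar\partial\chi)$. You have to allow $\rho$ to be quasi-psh with controlled negativity, e.g.\ $dd^{c}\rho\geq-\tfrac12\,dd^{c}\Phi$, so that $dd^{c}(k\Phi+k\rho)$ stays positive --- the usual Agmon/Donnelly--Fefferman calibration --- and this needs to be stated and checked. A second issue is the correction term $\langle v,K_{(k)}(z,\cdot)\rangle_{k\Phi}$: since $v$ is the minimal solution with respect to the \emph{twisted} norm, it is orthogonal to the twisted Bergman space but not to the untwisted one, and the bound $\|v\|_{k\Phi+k\rho}\leq Ce^{-k\delta}$ does not by itself control $\|v\|_{k\Phi}$ (the factor $e^{k\rho}$ can be large where $\rho>0$). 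You would have to split the pairing into the region near $z$, where $\rho\approx 0$ lets you use the twisted norm directly, and the region far from $z$, where the off-diagonal decay of $K_{(k)}(z,\cdot)$ saves the estimate. Both points are repairable, but they are precisely the fine analysis that the paper's M\"obius-equivariance argument renders unnecessary.
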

\begin{proof}
Let $\epsilon$ be a fixed (small) positive number. It will be convenient
to let $\delta$ be a small positive number (depending on $\epsilon)$
whose value may change from line to line. First we not that for any
$f\in H_{k\Phi}(U)$ we have \begin{equation}
f_{k}(0)=(k+\frac{R}{2})\int_{|w|<\epsilon}f_{k}e^{-k\Phi}dd^{c}\Phi+\left\Vert f_{k}\right\Vert _{\frac{1}{2}U,k\Phi}\mathcal{O}(e^{-k\delta})e^{k\Phi/2},\label{eq:local bergman at zero}\end{equation}
Indeed, applying the mean-value property of holomorphic functions
to $f_{k}(re^{i\theta})$ for $r$ fixed and then integrating over
$r$ (using that $\Phi$ only depends on $r)$ gives \[
f_{k}(0)=c_{k}\int_{|w|<\epsilon}fe^{-k\Phi}dd^{c}\Phi\]
where \begin{equation}
1/c_{k}=\int_{|w|<\epsilon}e^{-k\Phi}dd^{c}\Phi=\int_{|w|<C}e^{-k\Phi}dd^{c}\Phi-\int_{\epsilon<|w|<C}e^{-k\Phi}dd^{c}\Phi\label{eq:invers of c}\end{equation}
and where we take $C^{2}=-1/R$ when $R<0$ and $C=\infty$ otherwise.
Since, with $s=r^{2},$ \[
e^{-k\Phi}dd^{c}\Phi=\frac{2}{\pi}(1+Rr^{2})^{-2k/R-2}\frac{1}{2}d(r^{2})d\theta=\frac{d}{ds}\left(\frac{1}{k+R/2}(1+Rs)^{-2k/R-1}\right)ds\wedge\frac{d\theta}{2\pi}\]
the first integral in \ref{eq:invers of c} equals $1/(k+R/2)$ and
\begin{equation}
\int_{\epsilon<|w|<C}e^{-k\Phi}dd^{c}\Phi=\mathcal{O}(e^{-k\delta}),\,\,\,\delta=\Phi(\epsilon^{2})\label{eq:exp decay of integral}\end{equation}
The formula \ref{eq:local bergman at zero} then follows from the
trivial relation $(1+\mathcal{O}(e^{-k\delta}))^{-1}=1+\mathcal{O}(e^{-k\delta})$
combined with the Cauchy-Schwartz inequality.

Next, we fix $z$ in a given (small) neighbourhood $V$ of $U$ and
define \begin{equation}
F_{z}(w):=\zeta:=(z-w)/(1+R\bar{z}w)\label{eq:def of map f}\end{equation}
(which is invertible with $w=F_{w}(\zeta))$ mapping $0$ to $z$
and \[
g_{z}(w):=f_{k}(\zeta)e^{k\psi(\bar{z},w)},\,\,\,\psi(z,w):=R^{-1}\log(1+Rzw)\]
for a given $f_{k}\in H_{k\Phi}(U).$ Then \begin{equation}
|g_{z}(w)|^{2}e^{-k\Phi(w)}e^{-k\Phi(z)}=|f_{k}(\zeta)|^{2}e^{-k\Phi(\zeta)}\label{eq:transf or ptwise norms in disc}\end{equation}
as follows immediately from the relation \begin{equation}
\psi(\bar{z},w)+\psi(z,\bar{w})-\Phi(w)-\Phi(z)=-\Phi(\zeta)\label{eq:first relation}\end{equation}
(see section \ref{sub:Proofs-of-the} below). This shows in particular
that $g_{z}\in H_{k\Phi}(U).$ Applying the formula \ref{eq:local bergman at zero}
to $f_{k}:=g_{z}$ hence gives \[
f_{k}(z)=(k+\frac{R}{2})\int_{|w|<\epsilon}f_{k}(\zeta)e^{k\psi(\bar{z,}w)}{}^{-k\Phi(w)}d_{w}d_{w}^{c}\Phi+\left\Vert g_{z}\right\Vert _{\frac{1}{2}U,k\Phi}\mathcal{O}(e^{-k\delta}),\]
To rewrite this we first note that $dd^{c}\Phi$ is invariant under
the map $F_{z}$ (as follows immediately from differentiating \ref{eq:first relation})
and hence \ref{eq:transf or ptwise norms in disc} gives that \[
e^{-k\Phi(z)}\int_{w\in\frac{1}{2}U}|g_{z}(w)|^{2}e^{-k\Phi(w)}d_{w}d_{w}^{c}\Phi=\int_{w\in\frac{1}{2}U}|f_{k}(\zeta)|^{2}e^{-k\Phi(\zeta)}d_{\zeta}d_{\zeta}^{c}\Phi,\]
 i.e. that \[
\left\Vert g_{z}\right\Vert _{\frac{1}{2}U,k\Phi}e^{-k\Phi(z)/2}=\left\Vert f\right\Vert _{F_{z}(\frac{1}{2}U),k\Phi}(\leq\left\Vert f\right\Vert _{U,k\Phi})\]
Moreover, the relation \begin{equation}
\psi(\bar{z},w)-\Phi(w)=\psi(z,\bar{\zeta})-\Phi(\zeta)\label{eq:second relation}\end{equation}
(see section \ref{sub:Proofs-of-the} below) then gives that \begin{equation}
f_{k}(z)=(k+\frac{R}{2})\int_{F_{z}(\epsilon D)}f_{k}(\zeta)e^{k\psi(z,\bar{\zeta})-k\Phi(\zeta)}d_{\zeta}d_{\zeta}^{c}\Phi+\left\Vert f\right\Vert _{U,k\Phi}e^{k\Phi(z)/2}\mathcal{O}(e^{-k\delta}).\label{eq:f as int}\end{equation}
Next, we note that by the Cauchy-Schwartz inequality (applied to $f_{k}$
and $e^{k\psi})$ and the relation \ref{eq:first relation} (applied
to $w=\zeta)$ we have \[
|\int f_{k}(\zeta)e^{k\psi(z,\bar{\zeta})-k\Phi(\zeta)}d_{\zeta}d_{\zeta}^{c}\Phi|^{2}e^{-\Phi(z)}\leq|\int|f_{k}(\zeta)|^{2}e^{-k\Phi(\zeta)}d_{\zeta}d_{\zeta}^{c}\Phi|||\int e^{-k\Phi(w)}d_{w}d_{w}^{c}\Phi|^{2}\]
By \ref{eq:exp decay of integral} the second factor in the rhs above
is exponentially small when integrating over the complement of a small
disc centered at $w=0$ (i.e. a small neighbourhood of $z$ in the
$\zeta$ coordinates). Hence, we may as well replace the integration
region $F_{z}(\epsilon D)$ in \ref{eq:f as int} with all of $U$
at the expense of introducing the cut-off function $\chi,$ concluding
the proof of the proposition.
\end{proof}

\subsubsection{\label{sub:Proofs-of-the}Proofs of the relations \ref{eq:first relation}
and \ref{eq:second relation} by lifting}

The relations \ref{eq:first relation} and \ref{eq:second relation}
are without doubt well-known (and trivial for $R=0)$, but for completeness
we give a brief proof here. To this end we use a standard lifting
argument. Geometrically, this amounts to lifting $F_{z}$ above to
an isometry of line bundles: $L\otimes\bar{L}_{0}\rightarrow L\otimes\bar{L}_{z}.$
Consider the vector space $\C^{2}$ equipped with the diagonal Hermitian
bi-linear form with eigenvalues $(R,1)$ so that the corresponding
squared pseudo-norm $\left\Vert \cdot\right\Vert _{R}^{2}$ is given
by $\left\Vert (w_{1},w_{2})\right\Vert _{R}^{2}=R|w_{1}|^{2}+|w_{2}|^{2}.$
Let \[
M_{z}:=\left(\begin{array}{cc}
-1 & z\\
R\bar{z} & 1\end{array}\right),\,\,\,\mbox{\ensuremath{\pi}}(w_{1},w_{2})=w_{1}/w_{2}\]
(assuming $w_{2}\neq0),$ where clearly $M_{z}$ preserves $\left\Vert \cdot\right\Vert _{R}^{2}$
modulo the scaling factor $\det(M_{z})=\left\Vert (z,1)\right\Vert _{R}^{2}.$
In particular, $\left\Vert M_{z}(w,1)\right\Vert _{R}^{2}=\left\Vert (z,1)\right\Vert _{R}^{2}\left\Vert (w,1)\right\Vert _{R}^{2}$
and since $\zeta(:=F_{z}(w))=\pi(M_{R}(w,1))$ this proves (upon taking
logarithms) the relation \ref{eq:first relation}. The relation \ref{eq:second relation}
now follows by substituting the relation $(1+R|z|^{2})=(1+R\bar{z}w)(1+R\bar{z}\zeta$)
into \ref{eq:first relation}. In turn, this latter relation can be
obtained by first calculating $d\zeta/dw=-(1+R|z|^{2})/(1+R\bar{z}w)^{2}$
and similarly for $\zeta$ replaced with $w$ (using that $w=F_{z}(\zeta)).$
Since $d\zeta/dw=(dw/d\zeta)^{-1}$ this forces the previous relation,
finishing the proof of \ref{eq:second relation}.

\subsection{Globalization}
\begin{prop}
Let $L\rightarrow X$ be a positive Hermitian holomorphic line bundle
over a compact complex manifold $X$ and let $x\in X$ be a fixed
point such that the local weight $\Phi$ of the metric wrt some trivialization
of $L$ around $x$ is real-analytic ad admits a local Bergman kernel
$K_{(k)}$ mod $\mathcal{O}(e^{-k\delta})$ such that \begin{equation}
K_{(k)}(z,\zeta)=a_{k}e^{k\psi(z,\bar{\zeta})}\label{eq:statement of prop globalization}\end{equation}
 for some sequence $a_{k}$ with sub-exponential growth (i.e. $|a_{k}|\leq C_{\delta}e^{k\delta}$
for any $\delta>0).$ Then the (global) Bergman kernel $K_{k}$ associated
to $kL$ satisfies the uniform estimate \[
\left\Vert K_{k}-K_{(k)}\right\Vert _{k\Phi}\leq Ce^{-\delta k}\]
on some neighbourhood $U\times U$ of $\{x\}\times\{x\}$ for some
numbers $C,\delta>0.$\end{prop}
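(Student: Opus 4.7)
The plan is to build a global holomorphic approximation $s_k(z,\cdot)\in H^0(X,kL)$ by multiplying $K_{(k)}(z,\cdot)$ by a cut-off and correcting the result with the $L^2$-minimal solution of a $\bar\partial$-equation, and then to identify $s_k$ with the true Bergman kernel $K_k$ up to an exponentially small error by exploiting the local reproducing property \eqref{eq:def form for local bergm}. Fix a smooth cut-off $\chi$ supported in $U$ and equal to $1$ on a smaller neighbourhood $V$ of $x$, and for fixed $z\in V$ set $\tilde K_k(z,\cdot):=\chi\,K_{(k)}(z,\cdot)$, a smooth compactly supported section of $kL$. Its $\bar\partial$ equals $\bar\partial\chi\cdot K_{(k)}(z,\cdot)$ and is supported on $\operatorname{supp} d\chi$, at positive distance from $x$. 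Since $k\omega$ is strictly positive for $k$ large, H\"ormander's $L^2$-estimate yields a section $u_z$ of $kL$ satisfying $\bar\partial u_z=\bar\partial\chi\cdot K_{(k)}(z,\cdot)$, orthogonal to $H^0(X,kL)$, and bounded by
\[
\|u_z\|_{k\Phi}^2\le \frac{C}{k}\,\|\bar\partial\chi\cdot K_{(k)}(z,\cdot)\|_{k\Phi}^2.
\]

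The size of this right-hand side is extracted from the explicit form $K_{(k)}(z,w)=a_k e^{k\psi(z,\bar w)}$ together with the identity \eqref{eq:first relation}, which implies
\[
|K_{(k)}(z,w)|^2\,e^{-k\Phi(z)-k\Phi(w)}=|a_k|^2\,e^{-k\Phi(\zeta)},\qquad \zeta=F_z(w).
\]
Since $F_z(w)$ stays bounded away from $0$ on $\operatorname{supp} d\chi$ for $z\in V$, one has $\Phi(\zeta)\ge\delta_0>0$ there, and combined with the sub-exponential growth $|a_k|\le C_\eta e^{k\eta}$ this yields, for any $\delta<\delta_0$,
\[
\|\bar\partial\chi\cdot K_{(k)}(z,\cdot)\|_{k\Phi}^2\le C\,e^{k\Phi(z)}\,e^{-k\delta},
\]
and hence $\|u_z\|_{k\Phi}\le C'\,e^{k\Phi(z)/2}\,e^{-k\delta/2}$.

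Put $s_k(z,\cdot):=\tilde K_k(z,\cdot)-u_z$, a global holomorphic section of $kL$. For any $f\in H^0(X,kL)$, the orthogonality $u_z\perp H^0(X,kL)$ combined with the defining reproducing property \eqref{eq:def form for local bergm} of the local kernel $K_{(k)}$ gives
\[
\langle f,s_k(z,\cdot)\rangle_{k\Phi}=\langle f,\chi\,K_{(k)}(z,\cdot)\rangle_{k\Phi}=f(z)+\|f\|_{U,k\Phi}\,\mathcal O(e^{-k\delta})\,e^{k\Phi(z)/2}.
\]
Comparing with the exact reproducing identity $f(z)=\langle f,K_k(z,\cdot)\rangle_{k\Phi}$ for the global Bergman kernel and choosing $f:=K_k(z,\cdot)-s_k(z,\cdot)\in H^0(X,kL)$ yields $\|K_k(z,\cdot)-s_k(z,\cdot)\|_{k\Phi}\le C\,e^{k\Phi(z)/2}\,e^{-k\delta/2}$. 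Since $\tilde K_k(z,\cdot)=K_{(k)}(z,\cdot)$ on $V$, combining with the bound on $u_z$ gives an $L^2$-bound for $K_k-K_{(k)}$ on $V$. Finally, the Hermitian norm of the holomorphic-in-one-variable difference $K_k-K_{(k)}$ satisfies a sub-mean value inequality, which converts the $L^2$-bound into a uniform pointwise bound on $V\times V$ at the cost of only a polynomial factor in $k$, absorbed into the exponential.

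The main obstacle is preserving the exponentially small rate through each of these steps with uniform constants. One needs to check that the geometric constant $\delta_0=\inf_{\operatorname{supp}d\chi}\Phi(F_z(\cdot))$ (essentially the $\Phi$-distance from $x$ to $\operatorname{supp} d\chi$) is not degraded by the factor $1/k$ in H\"ormander's bound, by the sub-exponential factor coming from $a_k$, or by the conversion from $L^2$- to pointwise norms. Since all of these cost only polynomial factors in $k$, the resulting $\delta$ may be chosen arbitrarily close to $\delta_0$; shrinking $\operatorname{supp}\chi$ appropriately towards $\partial U$ then identifies $\delta_0$ with the $\Phi$-distance from $x$ to $\partial U$, which in the constant-curvature setting yields the explicit injectivity-radius rate appearing in Proposition \ref{pro:bergm as}.
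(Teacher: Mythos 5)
Your proposal is correct and takes essentially the same approach as the paper: truncate the model kernel with a cut-off, correct it by solving the $\bar\partial$-equation via the H\"ormander--Kodaira $L^2$-estimate (the exponentially small right-hand side coming from the off-diagonal decay of $e^{k\psi}$ via \ref{eq:first relation}), identify the result with the true Bergman kernel up to an exponentially small error through the local reproducing property \ref{eq:def form for local bergm}, and upgrade the $L^2$-bound to a pointwise one by a sub-mean-value argument. The only cosmetic difference is bookkeeping --- you package the approximant as $s_k = \chi K_{(k),z} - u_z$ and test against a general $f\in H^0(X,kL)$, whereas the paper applies the reproducing property directly to $K_{k,z}$ --- but the ingredients and the exponent $\delta$ obtained are identical.
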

\begin{proof}
Take local holomorphic coordinates $w$ centered at $x.$ The proof
of the proposition is essentially contained in the globalization argument
used in \cite{berm0}. For completeness we recall the argument. Fixing
$z$ and applying the defining formula \ref{eq:def form for local bergm}
$u_{k}:=K_{k,z}:=K_{k}(z,\cdot)$ gives \[
K_{k,z}=\left\langle \chi K_{k,z},K_{(k)}\right\rangle _{U,k\Phi}+\mathcal{O}(e^{-k\delta})e^{\Phi(z)/2}\]
 where we have used that $\left\Vert K_{k,z}\right\Vert _{U,k\Phi}^{2}\leq\left\Vert K_{k,z}\right\Vert _{X,k\Phi}=K_{k}(z,z)\leq Ck^{n}e^{k\Phi(z)}$
by a standard estimate for Bergman functions (as can be see from a
simple argument using the mean value property of holomorphic functions,
just as below). Next, we note that the difference $u_{k,z}:=K_{(k),z}-\left\langle \chi K_{k,z},K_{(k)}\right\rangle _{U,k\Phi}$
is the $L^{2}-$minimal solution to the $\bar{\partial}-$equation\begin{equation}
\bar{\partial}u=g,\label{eq:inhomo dbar eq}\end{equation}
with $g=\bar{\partial}(\chi K_{(k),z}),$ which by the Hörmander-Kodaira
$L^{2}-$estimate satisfies \begin{equation}
\left\Vert u_{k,z}\right\Vert _{k\Phi}^{2}\leq C\left\Vert g\right\Vert _{k\Phi}^{2}=\left\Vert (\bar{\partial}\chi)K_{(k),z}\right\Vert _{k\Phi}^{2},\label{eq:l2 estimate}\end{equation}
 (recall that $\bar{\partial}\chi$ is supported in a neighbourhood
of $w=0,$ vanishing close to $w=0).$ Hence the assumption \ref{eq:statement of prop globalization}
combined with the general basic fact that $e^{k2\mbox{Re}\psi(z,\bar{w})-\phi(w)-\phi(z)}$
is exponentially concentrated around $w=z$ (when $dd^{c}\phi>0)$
show that \[
\left\Vert u_{k,z}\right\Vert _{k\Phi,X}^{2}e^{-k\Phi(z)}\leq Ce^{-\delta k}.\]
 (in the case of Theorem \ref{thm:bergman kernel as} we get the same
$\delta$ as in \ref{eq:exp decay of integral} using \ref{eq:first relation}
as above). It is now a standard matter to convert this $L^{2}-$estimate
to an $L^{\infty}-$estimate for $|u_{k,z}(\zeta)|^{2}$ when $\zeta$
is close to $z.$ Indeed restricting the integration in the previous
inequality to a small disc $D_{k}(\zeta)$ of radius $\epsilon k^{-1/2}$
centered at $\zeta$ gives \[
\int_{D_{k}(\zeta)}|u_{k,z}(w)|^{2}dw\wedge d\bar{w}\leq C'e^{-\delta k}.\]
 Finally, since the integral in the lhs above may, by the mean value
property of holomorphic functions, be estimated from below by $c|u_{k,z}(\zeta)|^{2}/k$
this finishes the proof of the proposition.\end{proof}
\begin{rem}
\label{rem:explicit delta}Tracing through the arguments above in
fact gives an explicit expression for the exponent $\delta$ appearing
in the error terms in Theorem \ref{thm:bergman kernel as}. Indeed,
if we take $U$ as a disc $D_{r}$ of radius $r$ then $\delta$ can
be taken to be arbitrary close to $\Phi(r^{2}).$ To see this just
let the cut-off function $\chi$ instead be supported on $(1-\epsilon')U$
for a given $\epsilon'.$ Then $\epsilon$ appearing in \ref{eq:exp decay of integral}
can be taken arbitrarily close to $r(1-\epsilon').$ Note that if
$l$ is the radius of $D_{r}$ in the metric $\omega$ then, if $R$
is globally constant, the optimal choice of $r$ above correponds
to $l$ beeing the injectivity radius of $X$ at $x.$ A direct computation
gives $r=\frac{1}{\sqrt{-R}}\tanh(\sqrt{-\frac{\pi R}{2}}l)$ and
hence $\Phi(r^{2})=\frac{2}{-R}\log(\cosh(\sqrt{-\frac{\pi R}{2}}l)$.
Moreover, the proof of Theorem \ref{thm:bergman kernel as} also goes
through, word for word, in any dimension $n$ (so that $w=(w_{1},...,w_{n})$
etc) if one assumes that the Kähler metric $\omega$ has \emph{constant
holomorphic sectional curvature} (and in particular constant scalar
curvature $:=R).$ Indeed, using the normal coordinates in \cite{bo}
one reduces to $\Phi(z)=-(n+1)\log(1+R|w|^{2})$ as before. Computing
the integrals in \ref{eq:invers of c} then shows that the $k-$dependent
leading constant $k+R/2$ in the theorem has to be replaced by a constant
which is an explicit polynomial in $k$ which may be expanded as $k^{n}+\frac{R}{2}+O(k^{n-1}).$
Also note that the \emph{global} positivity (i.e. on $X-U)$ of the
line bundle in the previous proposition was only used in the Hörmander-Kodaira
$L^{2}-$estimate \label{rem:Tracing-through-the} (and that $C=C_{k}$
has sub-exponential growth in $k)$. For example, it holds as long
as $L$ is ample and globally \emph{semi}-positively curved (and positively
curved on $U)$. 
\end{rem}

\subsection{Proof of the Corollary}

The cororally follows from Theorem \ref{thm:bergman kernel as} and
the remark above. Indeed, writing $\omega=dd^{c}\Phi$ we have by
assumption that $dd^{c}\Phi=ce^{\Phi-\Phi_{D}}dz\wedge d\bar{z}$
where $\Phi_{D}=\log|s_{D}|^{2}$ for $s_{D}$ a holomorphic (multi-)section
of $D.$ Hence, $e^{-k\Phi}dd^{c}\Phi=e^{-\Phi_{k}},$ where $\Phi_{k}=(k-1)\Phi+\Phi_{D}$
is the weight of a singular metric on $L_{k}:=(k-1)L+D$ (i.e. $kL=L_{k}+K_{X})$
with positive curvature current. It then follows from Demailly's singular
version of the Hörmander-Kodaira $L^{2}-$estimates \cite{de4} that
the $L^{2}-$minimal solution $u$ to \ref{eq:inhomo dbar eq} satisfies
\begin{equation}
\left(\int_{X}|u|^{2}e^{-k\Phi}dd^{c}\Phi\right)=\int_{X}|u|^{2}e^{-\Phi_{k}}\leq\int_{X}\frac{g\bar{g}}{dd^{c}\Phi_{k}}e^{-\Phi_{k}}\label{eq:h-kodaria sharp case}\end{equation}
where $g$ is seen as a $(0,1)-$form with values in $L_{k}+K_{X}.$
In our case we take as above $g=\bar{\partial}(\chi K_{(k),z})$ which
is supported where $\Phi$ is smooth and hence the previous estimates
go through word for word (with a constant $C$ only depending on the
fixed point $x).$ Finally, since the space of all {}``cusp sections''
of $kL$ coincides with the subspace of $H^{0}(X,kL)$ of all sections
which are in $L^{2}$ wrt the $L^{2}-$norm defined by $\Phi_{k}$
this finishes the proof. This last fact follows from the well-known
fact \cite{heins} that $\Phi$ has only a mild singularity at any
cusp (corresponding to $z=0):$ $\Phi\sim-\log(-(\log|z|)).$ Hence,
since $dd^{c}\Phi=ce^{\Phi}\frac{1}{|z|^{2}}dz\wedge d\bar{z},$ a
local holomorphic function $u_{k}$ is locally integrable square wrt
$e^{-k\Phi}dd^{c}\Phi$ iff $u_{k}(0)=0.$


\begin{thebibliography}{44}
\bibitem{b-v-}Alvarez-Gaumé, L; Bost, J-B; Moore, G; Nelson, P; Vafa,
C: Bosonization on higher genus Riemann surfaces. Comm. Math. Phys.
112 (1987), no. 3

\bibitem{berm0}Berman, R:J:\emph{ }Berndtsson, B; Sjoestrand, J:\emph{
}A direct approach to Bergman kernel asymptotics of positive line
bundles\emph{, }Arkiv för Matematik. Volume 46 (2008) no. 2, 197--217 

\bibitem{berm2}Berman, R:J: Determinantal point processes and fermions
on complex manifolds: bulk universality. Preprint in 2008 at ArXiv.

\bibitem{berm6}Berman, R:J: Determinantal point processes and fermions
on complex manifolds: large deviations and bosonization. arXiv:0812.4224
(2008)

\bibitem{berm5}Berman, R:J: Analytic torsion, vortices and positive
Ricci curvature. arXiv:1006.2988 (2010)

\bibitem{b-b}Berman, R.J.; Boucksom, S: Growth of balls of holomorphic
sections and energy at equilibrium. Invent. Math. Vol. 181, Issue
2 (2010), p. 337

\bibitem{b-b-w}Berman, R.J.; Boucksom, S; Witt Nyström, D: Fekete
points and convergence towards equilibrium measures on complex manifolds.
Acta Math. To appear (arXiv:0907.2820).

\bibitem{bern2}Berndtsson, Bo: Positivity of direct image bundles
and convexity on the space of Kähler metrics. J. Differential Geom.
Vol. 81, Nr. 3 (2009), 457-482. 

\bibitem{b-g-s}Bismut J.-M., Gillet H., Soulé C., Analytic torsion
and holomorphic determinant bundles. I,II,III Comm. Math. Phys. 115
(1988), 49--78, 79--126, 301--351

\bibitem{bo} Bochner,S: Curvature in Hermitian metric. Bull. Amer.
Math. Soc. Volume 53, Number 2 (1947), 179-195

\bibitem{ca}Caillol, J.M.: Exact results for a two-dimensional one-component
plasma on a sphere. J. Physique, 42(12):L\textendash{}245\textendash{}L\textendash{}247,
1981.

\bibitem{dei2}Deift, P. A. Orthogonal polynomials and random matrices:
a Riemann-Hilbert approach. Courant Lecture Notes in Mathematics,
3. New York University, Courant Institute of Mathematical Sciences,
New York; American Mathematical Society, Providence, RI, 1999.

\bibitem{de4}Demailly, J-P: Complex analytic and algebraic geometry.
Available at www-fourier.ujf-grenoble.fr/\textasciitilde{}demailly/books.html

\bibitem{d-e}Diaconis, P; Evans, S.N.: Linear functionals of eigenvalues
of random matrices. Transactions of the American Mathematical Society,
Vol. 353, No. 7. (2001), pp. 2615-2633 

\bibitem{f}Fang, Hao: On a multi-particle Moser-Trudinger inequality.
Comm. Anal. Geom. 12 (2004), no. 5, 1155--1171.

\bibitem{fau}Faure, F: Thesis: \textquotedbl{}Approche géométrique
de la limite semi-classique par les états cohérents et mécanique quantique
sur le tore\textquotedbl{} (1993) (see section 8). Available at http://www-fourier.ujf-grenoble.fr/\textasciitilde{}faure/articles/These\_FF.pdf

\bibitem{f1}Forrester, P. J. Particles in a magnetic field and plasma
analogies: doubly periodic boundary conditions. J. Phys. A 39 (2006),
no. 41, 13025--13036.

\bibitem{g-s}Gillet, H; Soulé, C: Upper Bounds for Regularized Determinants.
Communications in Mathematical Physics. Issue Volume 199. Number 1
/ December, 1998

\bibitem{gr-ha}Griffiths, P; Harris, J: Principles of algebraic geometry.
Wiley Classics Library. John Wiley \& Sons, Inc., New York, 1994.

\bibitem{g-z}Guedj,V; Zeriahi, A: Intrinsic capacities on compact
Kähler manifolds. J. Geom. Anal. 15 (2005), no. 4, 607--639.

\bibitem{gu}Guionnet, A: Large Random Matrices: Lectures on Macroscopic
Asymptotics. École d'Été de Probabilités de Saint-Flour XXXVI \textendash{}
2006. Lecture Notes in Math, Vol. 1957, Springer

\bibitem{heins}Heins, M: On a class of conformal metrics. Nagoya
Math. J. Volume 21 (1962), 1-60 

\bibitem{h-k-p}Hough, J. B.; Krishnapur, M.; Peres, Y.l; Virág, B:
Determinantal processes and independence. Probab. Surv. 3 (2006),
206--229

\bibitem{ib}Ibragimov, I.A.: A theorem of Gabor Szegö. Mat. Zametki
3 (1968), 693\textendash{}702. {[}Russian{]}

\bibitem{j00}Johansson, K: On Szegö\textquoteright{}s asymptotic
formula for Toeplitz determinants and generalizations, Bull. Sci.
Math. (2) 112 (1988), no. 3, 257\textendash{}304.

\bibitem{j0}Johansson, K: On Random Matrices from the Compact Classical
Groups. Annals of math. 145, 1997, 519--545

\bibitem{j}Johansson, K: On fluctuations of eigenvalues of random
Hermitian matrices. Duke Math. J. 91 (1998), no. 1, 151--204.

\bibitem{j2}Johansson, K: Random matrices and determinantal processes.
arXiv:math-ph/0510038

\bibitem{kr}Krishnapur, M: From random matrices to random analytic
functions. Ann. Probab. Volume 37, Number 1 (2009), 314-346. 

\bibitem{la}Lang, S: Introduction to modular forms. Springer-Verlag,
1976.

\bibitem{liu}Liu, C-J: The asymptotic Tian-Yau-Zelditch expansion
on Riemann surfaces with Constant Curvature. arXiv:0710.1347 (2007)

\bibitem{liu-l}Liu, C-J; Lu, Z: On the asymptotic expansion of Tian-Yau-Zelditch.
arXiv:1105.0221 (2011)

\bibitem{l-t}Lu, Z; Tian, G: The log term of the Szegö kernel, Duke
Math. J. 125 (2004), no. 2, 351\textendash{}387.

\bibitem{le}Ledoux, M: The concentration of measure phenomenon. Math.
Surveys and Monographs 89, AMS (2001).

\bibitem{n-s}Nazarov, F; Sodin, M: Fluctuations in Random Complex
Zeroes: Asymptotic Normality Revisited. Int Math Res Notices (2011)
(published online February 5, 2011) 

\bibitem{on}Onofri, E: On the positivity of the effective action
in a theory of random surfaces. Comm. Math. Phys. 86 (1982), no. 3,
321--326. 

\bibitem{r-v1}Rider,B; Virag, B: The noise in the circular law and
the Gaussian free field. International Mathematics Research Notices
(2007) Vol. 2007 

\bibitem{r-v2}Rider,B; Virag, B: Complex determinantal processes
and H1 noise. Electronic Journal of Probability. Vol. 12 (2007) .

\bibitem{s-t}Saff.E; Totik.V: Logarithmic potentials with exteriour
fields. Springer-Verlag, Berlin. (1997) (with an appendix by Bloom,
T)

\bibitem{sh}Sheffield, Scott Gaussian free fields for mathematicians.
Probab. Theory Related Fields 139 (2007), no. 3-4, 521--541.

\bibitem{s-s}Schramm, O; Sheffield, S: Contour lines of the two-dimensional
discrete Gaussian free field. Acta Math. 202 (2009), no. 1, 21\textendash{}137. 

\bibitem{sim}Simon.B: The sharp form of the strong Szego theorem,
In {}``Geometry, Spectral Theory, Groups, and Dynamics'', Cont.
Math. 2005; Vol. 387

\bibitem{s1}Soshnikov, A. Determinantal random point fields. (Russian)
Uspekhi Mat. Nauk 55 (2000), no. 5(335), 107--160; translation in
Russian Math. Surveys 55 (2000), no. 5, 923--975

\bibitem{z}Zelditch, S: Book review of {}``Holomorphic Morse inequalities
and Bergman kernels Journal'' (by Xiaonan Ma and George Marinescu)
in Bull. Amer. Math. Soc. 46 (2009), 349-361. 
\end{thebibliography}
\end{document}